\newcommand{\ket}[1]{\vert{ #1 }\rangle}
\newtheorem{theorem}{Theorem}
\newtheorem{lemma}{Lemma}
\newtheorem{corollary}{Corollary}
\theoremstyle{definition}
\newtheorem{definition}{Definition}
\theoremstyle{remark}
\newcommand{\figpath}{.}
\begin{document}

\title{Constant-Overhead Magic State Injection into qLDPC Codes with Error Independence Guarantees}

\author{Guo Zhang}
\thanks{These authors contributed equally to this work.}
\affiliation{Graduate School of China Academy of Engineering Physics, Beijing 100193, China}

\author{Yuanye Zhu}
\thanks{These authors contributed equally to this work.}
\affiliation{Center on Frontiers of Computing Studies, Peking University, Beijing 100871, China}
\affiliation{School of Computer Science, Peking University, Beijing 100871, China}

\author{Xiao Yuan}
\affiliation{Center on Frontiers of Computing Studies, Peking University, Beijing 100871, China}
\affiliation{School of Computer Science, Peking University, Beijing 100871, China}

\author{Ying Li}
\email{yli@gscaep.ac.cn}
\affiliation{Graduate School of China Academy of Engineering Physics, Beijing 100193, China}

\begin{abstract}
Magic states are essential yet resource-intensive components for realizing universal fault-tolerant quantum computation. Preparing magic states within emerging quantum low-density parity-check (qLDPC) codes poses additional challenges, due to the complex encoding structures.
Here, we introduce a generic and scalable method for magic state injection into arbitrarily selected logical qubits encoded using qLDPC codes. Our approach, based on parallelized code surgery, supports the injection from either physical qubits or low-distance logical qubits. For qLDPC code families with asymptotically constant encoding rates, the method achieves injection into $\Theta(k)$ logical qubits---where $k$ denotes the logical qubit number of the code---with only constant qubit overhead and a time complexity of $\tilde{O}(d^2)$, where $d$ is the code distance.
A central contribution of this work is a rigorous proof that errors affecting the injected magic states remain independent throughout the procedure. This independence ensures the resilience of logical qubits against interactions with noisy ancillae and preserves the presumption of subsequent magic state distillation protocols. We further support our theoretical results with numerical validation through circuit-level simulations.
These findings advance the feasibility of scalable, fault-tolerant universal quantum computing using qLDPC codes, offering a pathway to significantly reduced qubit resource requirements in magic state injection.
\end{abstract}

\maketitle

Magic states are essential for implementing non-Clifford gates, playing a critical role in enabling quantum computational advantage over classical methods \cite{PhysRevA.70.052328,Bravyi2005,Chitambar2018QuantumRT}. Within quantum error correction, however, preparing magic states remains a significant and resource-intensive step toward achieving a universal set of logical operations. Although substantial progress has been made in magic state preparation for conventional quantum error-correcting codes, such as surface codes \cite{Eastin2009,Campbell2017,Gidney2024,Chen2025,2025arXiv250201743V}, efficiently preparing magic states in the emerging quantum low-density parity-check (qLDPC) codes continues to present notable challenges. Despite their advantage of significantly higher encoding rates compared to surface codes \cite{Gottesman2014,Breuckmann2021,Fowler2012,OGorman2017,Babbush2018}, qLDPC codes typically involve complex encoding structures, making logical operations---including magic state preparation---particularly difficult to realize efficiently.

Recent studies have demonstrated how code surgery methods \cite{Horsman2012,Litinski2019,Vuillot2019,Chamberland2022} can be adapted to quantum low-density parity-check (qLDPC) codes for performing logical measurements and implementing Clifford gates. Initially, these protocols required ancilla systems scaling as $O(d^2)$ for single-qubit logical operations, a cost comparable to that of conventional surface codes \cite{Cohen2022,Xu2024,Cowtan2024,Poirson2025}. Subsequent developments have significantly reduced this overhead to $\tilde{O}(d)$ by employing gauging measurements, provided the logical operators scale with a weight of $O(d)$ \cite{Williamson2024,Ide2024,Cross2024}. Additionally, several parallelization techniques have been introduced to enhance operational speed, such as optimized ancilla system designs, strategic code branching for logical operator decoupling, and integration with transversal gates \cite{Zhang2025,Xu2024a}. Currently, the most advanced protocol combines gauging measurements with brute-force branching, enabling the execution of $\Theta(k)$ Clifford gates with constant qubit overhead and achieving a time complexity of $\tilde{O}(d^2)$ \cite{Cowtan2025}. However, developing methods for injecting magic states at speeds comparable to Clifford gate implementations while maintaining constant qubit overhead remains an outstanding challenge \cite{Xu2024,Stein2024,He2025}.

\begin{figure}[t]
\centering
\includegraphics[width=\linewidth]{\figpath/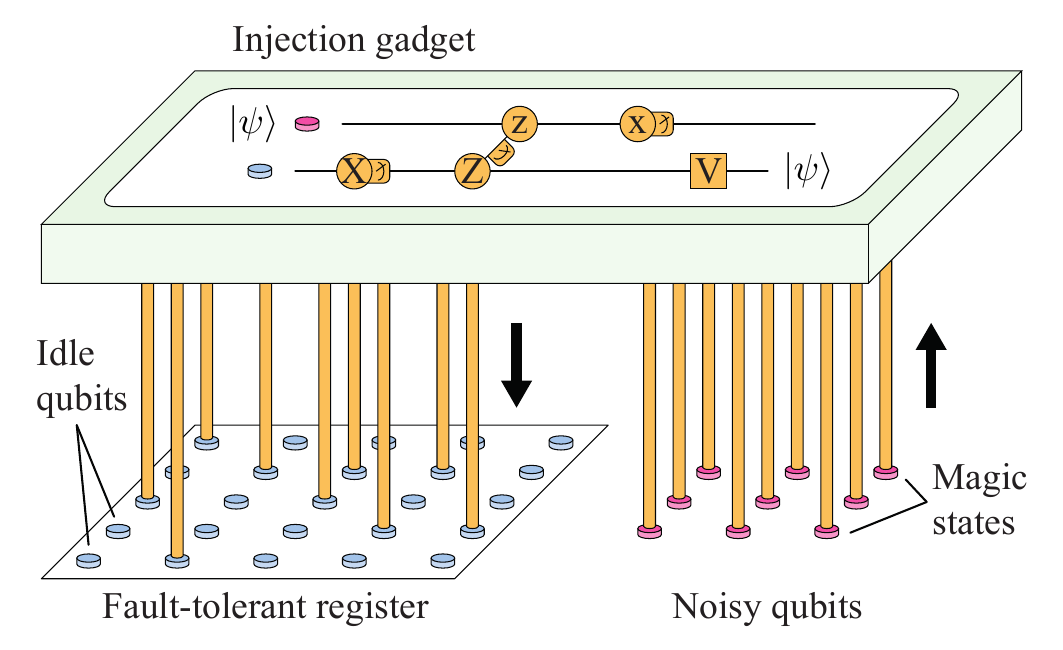}
\caption{
Magic state injection using joint logical measurements on fault-tolerant and noisy qubits. Magic states are prepared on noisy qubits and then transferred to arbitrarily chosen logical qubits in the register. Noisy qubits are either physical qubits or low-distance logical qubits encoded in such as low-distance surface codes and color codes. The register is encoded using a quantum low-density parity check code with a sufficiently large code distance to ensure fault tolerance. In the circuit, the gate $V = Z^{\mu_1+\mu_3}X^{\mu_2}$ depends on outcomes $\mu_1$, $\mu_2$, and $\mu_3$ of the $X$ measurement on the fault-tolerant logical qubit, $Zz$ joint measurement, and $x$ measurement on the noisy qubit, respectively. Each measurement outcome $\mu = 0,1$ corresponds to an operator eigenvalue $(-1)^{\mu}$. 
}
\label{fig:scheme}
\end{figure}

Here, we address this challenge by proposing a scheme to inject magic states into $\Theta(k)$ logical qubits encoded with qLDPC codes, achieving constant qubit overhead and a time complexity of $\tilde{O}(d^2)$. The scheme begins by preparing magic states on noisy qubits---either physical qubits or logical qubits encoded in low-distance independent code blocks~\cite{khaneja2005optimal,Li2014AMS,2024arXiv240303991I,2023arXiv230212292G,Chamberland2020VeryLO}---and then transfers them in parallel to a fault-tolerant quantum register via code surgery (see Fig.~\ref{fig:scheme}). The use of noisy qubits ensures constant overhead, while the parallel transfer process reduces the time cost.

To rigorously establish the efficacy of the scheme, we address two critical challenges. First, interactions with noisy qubits may temporarily weaken the protection of logical qubits in the register. During code surgery, the effective code distance is limited by that of the noisy qubits, which may compromise full-distance protection. Second, the parallel transfer of multiple magic states can potentially introduce correlated errors, which would impair the effectiveness of subsequent magic state distillation. Since distillation protocols~\cite{Bravyi2005,Bravyi2012,Campbell2017,Krishna2019,Wills2024} assume independent errors in the input states, such correlations can significantly degrade the output fidelity (see Appendix~\ref{app:distillation}).

In this work, we provide a rigorous proof that our scheme overcomes both challenges: it preserves the fault tolerance of logical qubits during state transfer and ensures that errors on the injected magic states remain statistically independent. As a result, the injected states can be reliably distilled~\cite{Bravyi2012,Krishna2019,Wills2024}, with recent protocols achieving near-constant spacetime overhead~\cite{Nguyen2024}. Moreover, our scheme is compatible with magic state cultivation, wherein noisy qubits host and evolve magic states before injection into the fault-tolerant register~\cite{Gidney2024,Chen2025,2025arXiv250201743V}.

In the remainder of this paper, we first present the implementation of the injection scheme and demonstrate that it achieves the desired resource and time efficiencies using gauging measurements and brute-force branching. We then develop a general theorem that characterizes logical errors arising during code surgery on qLDPC codes (see Theorem~\ref{the}) \cite{Cohen2022,Cross2024,Ide2024,Williamson2024}. In contrast to prior approaches focused on code distance alone, our theorem evaluates the weight of individual logical error configurations (see Definition~\ref{def:distance}) and accounts for both data and measurement errors. Using this framework, we analyze the logical errors introduced during magic state injection and resolve the two central challenges. Finally, we validate our theoretical results through numerical simulations based on a realistic circuit-level error model.

\vspace{0.2cm}

\noindent\textbf{Protocols and resource costs} --- 
The overall scheme is illustrated in Fig.~\ref{fig:scheme}. Throughout, we refer to the logical qubits within the register as fault-tolerant qubits. Suppose we aim to inject magic states into $q$ fault-tolerant qubits, arbitrarily selected from a total of $k$ such qubits in the register. The first step is to prepare magic states on $q$ noisy qubits, where each noisy qubit is independently encoded to ensure that errors on the magic states remain uncorrelated prior to injection. The register itself may be encoded as either a single code block or multiple blocks.

Without loss of generality, we assume the magic states are injected into the first $q$ fault-tolerant qubits; relabeling suffices for arbitrary target sets. Let $\{(X_j, Z_j) : j \in I_k\}$ and $\{(x_j, z_j) : j \in I_q\}$ denote the logical Pauli operators of the fault-tolerant and noisy qubits, respectively, where $I_m = {1, 2, \dots, m}$ is the index set. The injection is implemented via a sequence of logical measurements on the operator set $\{X_j, Z_j z_j, x_j : j \in I_q\}$. These measurements are realized through code surgery \cite{Cohen2022,Zhang2025,Williamson2024}, as will be elaborated later.

Assume the register is encoded using an $[[n, k, d]]$ code from a family with a constant encoding rate \cite{Breuckmann2021,Panteleev2022,Leverrier2022}, i.e., $n = \Theta(k)$, and each noisy qubit is encoded using an $[[n_{\mathrm{noi}}, 1, d_{\mathrm{noi}}]]$ code. To minimize overhead, the noisy qubits may be encoded using low-distance codes, such that $d_{\mathrm{noi}} = O(\mathrm{Polylog}(d))$, which leads to a qubit cost of $n_{\mathrm{noi}} = O(\mathrm{Polylog}(d))$. The total encoding cost thus becomes $n + q n_{\mathrm{noi}} = O(k + q \mathrm{Polylog}(d))$. By contrast, if one were to use full-distance surface-code logical qubits for injection, each block would require $n_{\mathrm{noi}} = \Theta(d^2)$ qubits, resulting in a total cost of $O(k + q d^2)$~\cite{Xu2024,Swaroop2024}.

In addition to encoding overhead, extra physical qubits are required to implement code surgery. Among available protocols, gauging measurement combined with brute-force branching is particularly resource-efficient \cite{Cowtan2025}. This strategy incurs a qubit overhead of $O(q d f)$, where $f = \mathrm{Polylog}(q) + \mathrm{Polylog}(d)$. To maintain constant qubit overhead, one can inject $q = \Theta(k / (d f))$ magic states per shot, completing the injection of $\Theta(k)$ magic states in $\Theta(d f)$ shots. Since code surgery involves $O(d)$ rounds of parity-check measurements per shot, the total time cost becomes $O(d^2 f)$.
Alternatively, using the devised sticking protocol for code surgery \cite{Zhang2025} allows the simultaneous injection of $\Theta(k)$ magic states in a single shot, reducing the time complexity to $O(d)$ at the expense of increasing qubit overhead by a factor of $d$.

Finally, we discuss the choice of $d_{\mathrm{noi}}$. Since the injection process requires $O(d)$ rounds of measurements, the noisy qubits must be sufficiently protected to prevent excessive error accumulation. To ensure that the per-round error rate remains $O(1/d)$, it suffices to choose $d_{\mathrm{noi}} = O(\mathrm{Polylog}(d))$. Importantly, for moderate values of $d$, our scheme remains applicable even when noisy qubits are unencoded physical qubits (i.e., $d_{\mathrm{noi}} = 1$), further reducing hardware requirements. 

\vspace{0.2cm}

\noindent\textbf{Theoretical tools} ---
We now develop the theoretical tools necessary to prove the efficacy of our scheme. 

We begin by presenting the formalism underlying code surgery protocols. Each code surgery operation simultaneously measures a set of logical operators. We focus on measurements of $Z$ logical operators on an $[[n, k, d]]$ CSS code block, and the conclusions can be applied to $X$ logical operators. Following formalism of CSS codes, we represent the $[[n, k, d]]$ code by the tuple $(H_X, H_Z, J_X, J_Z)$, where $H_X$ and $H_Z$ ($J_X$ and $J_Z$) are the check (generator) matrices representing the $X$ and $Z$ stabilizer (logical) operators, respectively. These matrices satisfy the following constraints: $H_X H_Z^\mathrm{T} = H_X J_Z^\mathrm{T} = H_Z J_X^\mathrm{T} = 0$ and $J_X J_Z^\mathrm{T} = E_k$. The set of $Z$ logical operators to be measured simultaneously is represented by a matrix $\alpha J_Z$, where $\alpha \in \mathbb{F}^{q \times k}$ is a full-rank matrix. Each row of $\alpha J_Z$ corresponds to a $Z$ logical operator, and $q$ independent operators are measured simultaneously. Similarly, the unmeasured $X$ and $Z$ logical operators are represented by $\alpha_\perp J_X$ and ${\alpha_\perp^\mathrm{r}}^\mathrm{T} J_Z$, respectively, where $\alpha_\perp \in \mathbb{F}^{(k-q) \times k}$ is a full-rank matrix satisfying $\alpha_\perp \alpha^\mathrm{T} = 0$. Here, the superscript ‘r’ denotes the right inverse. 

The logical measurement is carried out using a deformed code derived from the original code $(H_X, H_Z, J_X, J_Z)$ and the target operators $\alpha J_Z$ to be measured. Although there are various ways to construct the deformed code \cite{Cohen2022,Cowtan2024,Zhang2025,Cross2024,Swaroop2024,Ide2024,Williamson2024,Cowtan2025}, their check matrices take the homological-measurement form 
\begin{eqnarray}
\bar{H}_X = \left(\begin{array}{cc}
H_X & T \\
0 & H_M
\end{array}\right) \text{ and } \bar{H}_Z = \left(\begin{array}{cc}
H_Z & 0 \\
S & H_G^\mathrm{T}
\end{array}\right). \notag 
\end{eqnarray}
To realize the desired simultaneous measurement, the matrices $S$, $T$, $H_G$, and $H_M$ are chosen to satisfy the following conditions: i) $H_X S^\mathrm{T} = T H_G$; ii) $H_M H_G = 0$; iii)  there exists a matrix $R$ such that $\alpha J_Z R S = \alpha J_Z$ and $H_G (\alpha J_Z R)^\mathrm{T} = 0$; and iv) there exists a matrix $\beta$ such that $\alpha_\perp J_X S^\mathrm{T} = \beta H_G$. To ensure fault tolerance, the pair $(H_G, H_M)$ is chosen to be $(d_R, S)$-bounded for some positive integer $d_R$. Additional properties and a detailed analysis of the deformed code are provided in Appendix~\ref{app:deformed}. 


\begin{definition}
{\bf $(d_R, S)$-bounded pair.} Let $H_G \in \mathbb{F}_2^{r_G \times n_G}$ and $H_M \in \mathbb{F}_2^{r_M \times r_G}$ be two matrices satisfying $H_M H_G = 0$. Given a matrix $S \in \mathbb{F}^{n \times n_G}$ and a positive integer $d_R$, the pair $(H_G, H_M)$ is said be $(d_R, S)$-bounded if and only if for every vector $v \in \mathbb{F}^{r_G}$ satisfying $H_M v^\mathrm{T} = 0$ and $\vert v \vert < d_R$, there exists a vector $u_\star \in \mathbb{F}^{n_G}$ such that $v^\mathrm{T} = H_G u_\star^\mathrm{T}$ and $\vert u_\star S \vert \leq \vert v \vert$. 
\label{def:bound}
\end{definition}

The $(d_R, S)$-bound condition generalizes the Cheeger-constant condition introduced in Refs.~\cite{Cross2024,Williamson2024,Ide2024} in two respects: it applies to ancilla systems constructed via the hypergraph product that can measure multiple logical operators, and it extends to cases where the $S$ matrix is not a projection. We note that gauging measurement, even when combined with brute-force branching, yields $d_R = \infty$, while the original protocol and the devised sticking approach can achieve any desired $d_R$ using $O(nd_R)$ physical qubits. 

Using the deformed code, we implement the logical measurement as follows. In addition to the original code block, code surgery employs an ancilla system consisting of a separate set of qubits. The original code block and the ancilla system correspond to the first and second columns of the deformed-code check matrices ($\bar{H}_X$ and $\bar{H}_Z$), respectively. Given an input state within the logical subspace of the original code $(H_X, H_Z)$, we initialize all ancilla-system qubits in the $\ket{+}$ state, perform $d_T$ rounds of parity-check measurements of the deformed code, and then measure the ancilla-system qubits in the $X$ basis. Afterward, we resume parity-check measurements of $(H_X, H_Z)$. Eigenvalues of the measured $Z$ logical operators $\alpha J_Z$ are extracted from the deformed-code parity-check measurements. See Appendix~\ref{app:measurement} for details. 

\begin{figure}[t]
\centering
\includegraphics[width=\linewidth]{\figpath/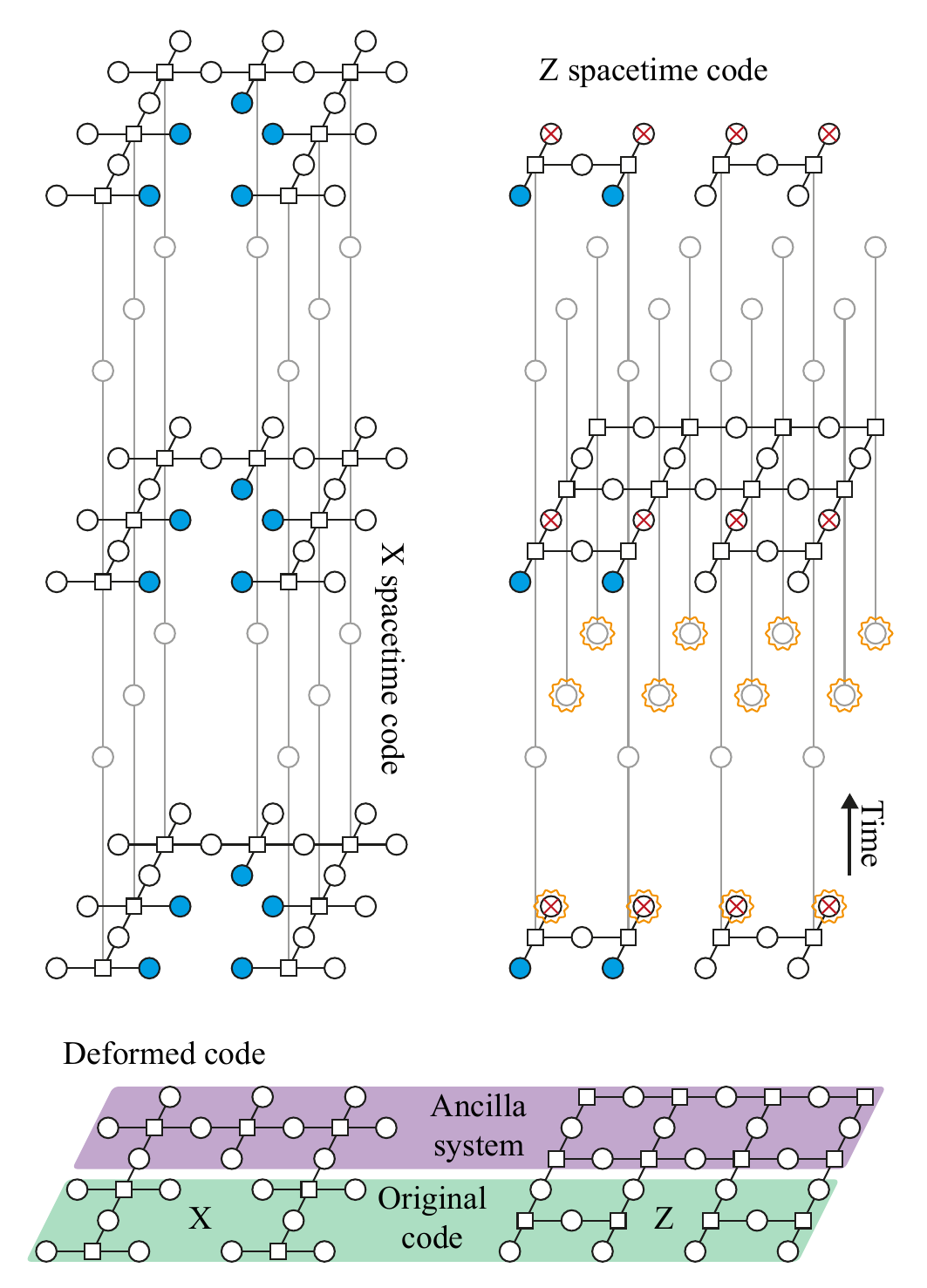}
\caption{
Tanner graphs of the deformed code and spacetime codes. We illustrate the measurement of the $Z \otimes Z$ operator on two surface-code logical qubits as an example. The surface code has a distance of two, and we set $d_R = d_T = 2$ in the measurement protocol. Circles represent bits, while squares represent check nodes. In spacetime codes, bits on horizontal edges correspond to data qubits in spacetime, and bits on vertical edges correspond to measurement outcomes of stabilizer operators. Blue-filled circles indicate the propagation of unmeasured logical operators $X \otimes X$ and $Z \otimes I$, whereas circles marked with red crosses represent the propagation of the measured logical operator $Z \otimes Z$. Circles enclosed by zigzagged outlines denote the extraction of the eigenvalue. 
}
\label{fig:spacetime}
\end{figure}

To facilitate a detailed analysis of logical errors in the quantum circuit of code surgery, we analyze the spacetime propagation of stabilizer and logical operators by introducing six characteristic matrices that encode their behavior. The propagation of each $X$ or $Z$ Pauli operator through spacetime can be represented by a vector. Each entry in the vector corresponds to either an $X$ ($Z$) operator acting on a qubit or the measurement outcome of an $X$ ($Z$) stabilizer operator at a specific time step in the circuit; see Fig.\ref{fig:spacetime}. We focus on six matrices formed from such vectors. The propagation of $X$ and $Z$ stabilizer operators gives rise to error checks throughout the circuit, captured by the spacetime check matrices $H^{st}_X$ and $H^{st}_Z$, respectively~\cite{Cross2024,Williamson2024}. The propagation of unmeasured $X$ and $Z$ logical operators, as well as measured $Z$ logical operators from the input state to the output state, is described by the generator matrices $J^{st}_X$, $J^{st}_Z$, and $J^{st}_{mz}$, respectively. Finally, the propagation of measured logical $Z$ operators from the input state to a set of stabilizer outcomes (the extraction of logical operator eigenvalues) is described by the generator matrix $J^{st}_{oc}$. The explicit construction of these spacetime matrices is provided in Appendix~\ref{app:spacetime}. 

Let $e_Z$ and $e_X$ be vectors representing $Z$ and $X$ errors at spacetime locations within the circuit. Errors are undetectable if $J^{st}_X e_Z^\mathrm{T} = 0$ and $J^{st}_Z e_X^\mathrm{T} = 0$. Logical operators remain unaffected by errors when $J^{st}_X e_Z^\mathrm{T} = J^{st}_Z e_X^\mathrm{T} = 0$ and $J^{st}_{mz} e_X^\mathrm{T} = 0$. The extracted eigenvalues of measured logical operators are correct when $J^{st}_{oc} e_X^\mathrm{T} = 0$. If $\psi = J^{st}_{oc} e_X^\mathrm{T}$ is not zero, then each one-valued entry of $\psi$ indicates that the corresponding eigenvalue is incorrect; similar interpretations apply to the other generator matrices. These spacetime matrices provide a framework for analyzing both logical errors within the circuit and errors in the extracted eigenvalues. 

Since the injection acts on a composite system comprising both high-distance and low-distance logical qubits, it is essential to assess the error protection of each logical operator and extracted eigenvalue individually. To this end, we introduce the following distance measure. 

\begin{definition}
{\bf Error-wise distance.} Given the check matrix $H \in \mathbb{F}_2^{r \times n}$, a generator matrix $J \in \mathbb{F}_2^{k \times n}$, and any vector $\psi \in \mathbb{F}_2^k$, the error-wise distance is defined as the minimum weight of an undetectable physical error $e$ that results in a logical error $\psi$. Specifically, 
\begin{eqnarray}
d(H, J, \psi) &=& \min\{\vert e \vert : e \in \mathrm{ker} H, J e^\mathrm{T} = \psi^\mathrm{T}\},
\end{eqnarray}
if there exists at least one $e \in \mathrm{ker} H$ satisfying $J e^\mathrm{T} = \psi^\mathrm{T}$. If no such $e$ exists, we define $d(H,J,\psi) = \infty$. 
\label{def:distance}
\end{definition}

\begin{lemma}
For all matrices $\phi \in \mathbb{F}_2^{p \times k}$ and vectors $\psi \in \mathbb{F}_2^p$, the error-wise distance satisfies 
\begin{eqnarray}
d(H, \phi J, \psi) &=& \min\{d(H, J, \psi_\star) : \psi_\star \in \mathbb{F}_2^k, \phi \psi_\star^\mathrm{T} = \psi^\mathrm{T}\}.~~~
\end{eqnarray}
\label{lem:distance}
\end{lemma}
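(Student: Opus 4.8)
The plan is to prove the identity purely combinatorially, by decomposing the feasible set in the left-hand minimization according to the intermediate logical error that each physical error produces under $J$ alone; no structural property of the code is required.

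First I would unfold the left-hand side. By Definition~\ref{def:distance}, $d(H, \phi J, \psi)$ is the minimum weight over $e \in \ker H$ satisfying $\phi J e^\mathrm{T} = \psi^\mathrm{T}$. For any such $e$ I would introduce the vector $\psi_\star^\mathrm{T} := J e^\mathrm{T}$, namely the logical error induced by $e$ with respect to $J$. The constraint $\phi J e^\mathrm{T} = \psi^\mathrm{T}$ then becomes exactly the condition $\phi \psi_\star^\mathrm{T} = \psi^\mathrm{T}$, so the feasible set decomposes as
\[
\{e \in \ker H : \phi J e^\mathrm{T} = \psi^\mathrm{T}\} = \bigsqcup_{\psi_\star} \{e \in \ker H : J e^\mathrm{T} = \psi_\star^\mathrm{T}\},
\]
a disjoint union over all $\psi_\star \in \mathbb{F}_2^k$ with $\phi \psi_\star^\mathrm{T} = \psi^\mathrm{T}$. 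Disjointness is immediate, since $J e^\mathrm{T}$ is determined by $e$, so each admissible $e$ lies in the single block indexed by $\psi_\star^\mathrm{T} = J e^\mathrm{T}$.

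Next I would minimize the Hamming weight across this decomposition. Because the minimum weight over a union of sets equals the least of the per-block minima, and the minimum over the $\psi_\star$-block is by definition $d(H, J, \psi_\star)$, I immediately obtain the claimed formula $d(H, \phi J, \psi) = \min_{\psi_\star} d(H, J, \psi_\star)$, where the minimum ranges over $\psi_\star$ with $\phi \psi_\star^\mathrm{T} = \psi^\mathrm{T}$.

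The step I would treat most carefully---though it is bookkeeping rather than a genuine obstacle---is the handling of the $\infty$ cases, which must match the convention in Definition~\ref{def:distance}. If no admissible $e$ exists, the left-hand side is $\infty$; I would then verify that either no admissible $\psi_\star$ exists, so the right-hand minimum over the empty set is $\infty$ by convention, or every admissible $\psi_\star$ has $d(H, J, \psi_\star) = \infty$, because a finite value would supply an $e$ with $J e^\mathrm{T} = \psi_\star^\mathrm{T}$ and hence $\phi J e^\mathrm{T} = \psi^\mathrm{T}$, a contradiction. The decomposition above makes both collections of feasible errors empty simultaneously, so the two sides are finite together, agree when finite, and are both $\infty$ otherwise. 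I expect no deeper difficulty, since the lemma is ultimately a statement about preimages under the linear map $\phi$ combined with the additivity of weight-minimization over a partition.
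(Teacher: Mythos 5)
Your proof is correct and follows essentially the same route as the paper's: both rewrite the feasible set by introducing the intermediate logical error $\psi_\star^\mathrm{T} = J e^\mathrm{T}$, turn the constraint $\phi J e^\mathrm{T} = \psi^\mathrm{T}$ into $\phi \psi_\star^\mathrm{T} = \psi^\mathrm{T}$, and minimize blockwise, with the $\infty$ cases checked separately. Your treatment of the disjoint decomposition and the empty-set conventions is slightly more explicit than the paper's, but there is no substantive difference.
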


We prove that the error-wise distance in code surgery is well lower bounded.

\begin{theorem}
In the code surgery on qLDPC codes, when both $d_R$ and $d_T$ are sufficiently large, the error-wise distances are lower-bounded by those of the original code. Specifically, the error-wise distances for errors on logical operators have the following lower bounds: 
\begin{eqnarray}
d(H^{st}_X, J^{st}_X, \psi) &\geq& \min\{d(H_X, \alpha_\perp J_X, \psi), d_R\},
\label{eq:disX}
\end{eqnarray} 
\begin{eqnarray}
d(H^{st}_Z, J^{st}_Z, \psi) &\geq& d(H_Z, {\alpha_\perp^\mathrm{r}}^\mathrm{T} J_Z, \psi),
\end{eqnarray}
and 
\begin{eqnarray}
d(H^{st}_Z, J^{st}_{mz}, \psi) &\geq& d(H_Z, \alpha J_Z, \psi);
\end{eqnarray}
and the error-wise distances of errors on extracted eigenvalues have the following lower bound: 
\begin{eqnarray}
d(H^{st}_Z, J^{st}_{oc}, \psi) &\geq& \min\{d(H_Z, \alpha J_Z, \psi),d_T\}.
\end{eqnarray}
\label{the}
\end{theorem}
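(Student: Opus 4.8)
The plan is to prove all four bounds by a single reduction principle. From a minimum-weight undetectable spacetime error $e$ realizing the prescribed logical effect $\psi$, I would construct an error $e'$ supported on the original code block at a single time slice that remains undetectable there, reproduces $\psi$ under the corresponding original generator, and satisfies $|e'| \le |e|$. Because $e'$ is then feasible for the right-hand side, this gives $d_{\text{spacetime}} = |e| \ge d_{\text{original}}$. Whenever the construction is guaranteed only below a weight threshold---$d_R$ for the ancilla reduction, $d_T$ for the eigenvalue extraction---the complementary case $|e| \ge d_R$ (resp.\ $|e| \ge d_T$) supplies the other branch of the $\min$.

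First I would decompose $e$, following Appendix~\ref{app:spacetime}, into data-block versus ancilla-system parts and into data-qubit versus measurement-outcome parts, and carry out a temporal reduction: using the repeated-measurement detector relations that define $H^{st}_X$ and $H^{st}_Z$, I collapse $e$ onto a single round by adding spacetime stabilizers, trading measurement-outcome components for data components across consecutive rounds without increasing the weight, so that the block structure $(H_X,T,H_M)$ and $(H_Z,S,H_G^{\mathrm T})$ of $\bar H_X,\bar H_Z$ applies directly. For the eigenvalue-extraction bound this step is decisive: corrupting an extracted eigenvalue of $\alpha J_Z$ while evading every detector requires either a data error that acts as the measured operator---costing $d(H_Z,\alpha J_Z,\psi)$---or a temporal chain of measurement errors spanning all $d_T$ rounds, costing at least $d_T$, which yields the $\min\{d(H_Z,\alpha J_Z,\psi),d_T\}$ bound for $J^{st}_{oc}$.

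The spatial reduction onto the data block is the crux. For the $X$-error bounds it needs no weight threshold: the top block of $\bar H_Z e^{\mathrm T}=0$ already yields $H_Z e_{\text{data}}^{\mathrm T}=0$, so the data part is undetectable by $H_Z$ on its own; for the unmeasured logicals $J^{st}_Z$, whose support lies on the data block, discarding the ancilla portion preserves $\psi$ and only lowers the weight, and for the measured logicals $J^{st}_{mz}$ the same holds once the correction matrix $R$ of condition (iii), with $\alpha J_Z R S=\alpha J_Z$ and $H_G(\alpha J_Z R)^{\mathrm T}=0$, accounts for the ancilla contribution to the propagated operator. For the $Z$-error bound \eqref{eq:disX} the data part is not automatically clean: $\bar H_X e^{\mathrm T}=0$ gives $H_X e_{\text{data}}^{\mathrm T}=T e_{\text{anc}}^{\mathrm T}$ and $H_M e_{\text{anc}}^{\mathrm T}=0$. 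Invoking the $(d_R,S)$-bounded property (Definition~\ref{def:bound}) when $|e|<d_R$ produces $u_\star$ with $e_{\text{anc}}^{\mathrm T}=H_G u_\star^{\mathrm T}$ and $|u_\star S|\le|e_{\text{anc}}|$; setting $e'=e_{\text{data}}+u_\star S$, condition (i) $H_X S^{\mathrm T}=T H_G$ gives $H_X e'^{\mathrm T}=T e_{\text{anc}}^{\mathrm T}+T H_G u_\star^{\mathrm T}=0$, condition (iv) $\alpha_\perp J_X S^{\mathrm T}=\beta H_G$ tracks the logical effect so that $\alpha_\perp J_X e'^{\mathrm T}=\psi^{\mathrm T}$, and $|e'|\le|e_{\text{data}}|+|e_{\text{anc}}|=|e|$. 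This gives $d(H_X,\alpha_\perp J_X,\psi)\le|e|$ when $|e|<d_R$, and the trivial bound $\ge d_R$ otherwise, establishing \eqref{eq:disX}.

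Throughout I would use Lemma~\ref{lem:distance} to reconcile logical labels, since the grouped operators enter the right-hand sides as $\phi J$ with $\phi\in\{\alpha,\alpha_\perp,{\alpha_\perp^{\mathrm r}}^{\mathrm T}\}$, and the lemma converts $d(H,\phi J,\psi)$ into a minimization over targets $\psi_\star$ with $\phi\psi_\star^{\mathrm T}=\psi^{\mathrm T}$, matching what the spacetime generators read off at the input slice. The main obstacle I anticipate is the rigorous interplay of the two reduction stages: I must verify that the temporal collapse and the $(d_R,S)$-bounded spatial substitution can be performed so that neither reintroduces the other's errors, that the weight accounting stays non-increasing once data and measurement components are jointly tracked, and that conditions (i)--(iv) preserve both undetectability and the exact logical class after substitution. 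Confirming that the truncations $\min\{\cdot,d_R\}$ and $\min\{\cdot,d_T\}$ capture every low-weight route by which an error could evade the reduction is the delicate bookkeeping at the heart of the theorem.
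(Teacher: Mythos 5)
Your proposal follows essentially the same route as the paper's proof: a temporal collapse of the spacetime error onto a single slice with non-increasing weight (the paper's effective error $u_{\mathrm{eff}} = u_0 + U 1_{d_T-1}^{\mathrm{T}} + u_1$), followed by the spatial reduction to the original code via the $(d_R,S)$-bounded substitution $e' = e_{\mathrm{data}} + u_\star S$ under conditions (i) and (iv) --- which is exactly the paper's deformed-code lemma --- and the error-free-round dichotomy that produces the $\min\{\cdot, d_T\}$ truncation for $J^{st}_{oc}$. One bookkeeping detail differs: the correction matrix $R$ of condition (iii) is actually needed in the $J^{st}_{oc}$ bound (to convert first-round outcome flips into data-block errors), not in the $J^{st}_{mz}$ bound, where the intermediate-round generators $\alpha J_Z \eta$ are already supported on the data block.
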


\noindent See Appendices \ref{app:lemma} and \ref{app:theorem} for proofs of the lemma and theorem, respectively. 

\vspace{0.2cm}

\noindent\textbf{Error independence analysis} ---
We now use Theorem~\ref{the} to analyze logical errors in magic state injection. Our focus is on the joint measurement, as other operations neither compromise fault-tolerant qubits nor introduce correlated errors in magic states (see Fig.~\ref{fig:scheme}). Since the joint measurement is applied on the register and noisy qubits, the original code in code surgery becomes the one with check matrices $H_{X/Z} = H_{reg,X/Z} \oplus (E_q \otimes H_{noi,X/Z})$, where $H_{reg,X/Z}$ and $H_{noi,X/Z}$ are check matrices of the register and noisy qubit, respectively. Parameters of the original code become $[[n + qn_{noi}, k + q, \min\{d, d_{noi}\}]]$. In the joint measurement, we measure operators $\{Z_j z_j : j \in I_q\}$, which correspond to the generator matrices $J^{st}_{mz}$ and $J^{st}_{oc}$. 

There are two sets of unmeasured qubits. First, for each pair of qubits $(X_j, Z_j)$ and $(x_j, z_j)$ involved in the joint measurement, measuring $Z_j z_j$ leaves an effective unmeasured qubit with logical operators $(X_j x_j, Z_j)$. Without loss of generality, we assume that these unmeasured operators, $\{(X_j x_j, Z_j) : j \in I_q\}$, correspond to the first $q$ rows of the generator matrices $J^{st}_X$ and $J^{st}_Z$. Second, the logical operators of idle qubits are given by $\{(X_{q+j}, Z_{q+j}) : j \in I_{k-q}\}$, which correspond to the remaining $k-q$ rows of the generator matrices $J^{st}_X$ and $J^{st}_Z$. 

\begin{corollary}
Idle qubits are fault-tolerant in the joint measurement when $d_R\geq d$. 
\end{corollary}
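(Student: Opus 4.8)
The plan is to show that the two error-wise distances governing logical errors on the idle rows are both bounded below by $d$, which is precisely what fault tolerance of these qubits means. Writing $\phi_X$ and $\phi_Z$ for the projections onto the idle rows (the final $k-q$ rows) of $J^{st}_X$ and $J^{st}_Z$, the goal is to establish $d(H^{st}_X,\phi_X J^{st}_X,\psi')\geq d$ and $d(H^{st}_Z,\phi_Z J^{st}_Z,\psi')\geq d$ for every nonzero $\psi'\in\mathbb{F}_2^{k-q}$. Any undetectable error of weight below $d$ then acts trivially on the idle logical information.

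First I would use Lemma~\ref{lem:distance} to move the projection through the error-wise distance and Theorem~\ref{the} to pass from the spacetime code to the original combined code. Concretely, Lemma~\ref{lem:distance} gives $d(H^{st}_X,\phi_X J^{st}_X,\psi')=\min\{d(H^{st}_X,J^{st}_X,\psi_\star):\phi_X\psi_\star^{\mathrm T}=\psi'^{\mathrm T}\}$; applying the first bound of Theorem~\ref{the} term by term and then pulling the minimum back inside via Lemma~\ref{lem:distance} yields
\[
d(H^{st}_X,\phi_X J^{st}_X,\psi')\ \geq\ \min\{d(H_X,\phi_X\alpha_\perp J_X,\psi'),\,d_R\}.
\]
The same two steps, now using the $Z$-type bound of Theorem~\ref{the} (which carries no $d_R$ cap), give $d(H^{st}_Z,\phi_Z J^{st}_Z,\psi')\geq d(H_Z,\phi_Z{\alpha_\perp^{\mathrm{r}}}^{\mathrm{T}}J_Z,\psi')$. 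Here $\phi_X\alpha_\perp J_X$ and $\phi_Z{\alpha_\perp^{\mathrm{r}}}^{\mathrm{T}}J_Z$ are exactly the idle operators $\{X_{q+j}\}$ and $\{Z_{q+j}\}$ of the combined original code.

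The heart of the argument is a statement about the combined code $H_{X/Z}=H_{reg,X/Z}\oplus(E_q\otimes H_{noi,X/Z})$: although its distance is only $\min\{d,d_{noi}\}$, the idle operators act trivially on every noisy block, so their error-wise distance still equals that of the register alone. I would prove this directly. For $e\in\ker H_X$ with $\phi_X\alpha_\perp J_X e^{\mathrm T}=\psi'^{\mathrm T}\neq 0$, the block-diagonal form forces the register component $e_{reg}\in\ker H_{reg,X}$, and since each idle operator is supported only on the register, $\phi_X\alpha_\perp J_X e^{\mathrm T}$ depends on $e_{reg}$ alone. A nonzero $\psi'$ therefore certifies $e_{reg}$ as a nontrivial logical representative of the register code, so $|e|\geq|e_{reg}|\geq d$; the $Z$-type case is identical. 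This gives $d(H_X,\phi_X\alpha_\perp J_X,\psi')\geq d$ and $d(H_Z,\phi_Z{\alpha_\perp^{\mathrm{r}}}^{\mathrm{T}}J_Z,\psi')\geq d$. Combining with the displayed bounds and the hypothesis $d_R\geq d$ then closes both cases, since $\min\{d,d_R\}=d$.

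The step I expect to be the main obstacle is the combined-code claim, because it must carefully separate the idle qubits---whose operators live entirely on the register and hence inherit the full distance $d$---from the injected qubits, whose operators $X_jx_j$ reach into the low-distance noisy blocks and are only protected to $\min\{d,d_{noi}\}$. Getting the bookkeeping right, namely that the block-diagonal kernel decomposition isolates $e_{reg}$ and that acting nontrivially on even a single idle logical qubit already forces $e_{reg}$ to be a weight-$\geq d$ logical operator, is what makes fault tolerance survive despite the globally reduced distance of the surgery code. I would also note that only $J^{st}_X$ and $J^{st}_Z$ (state errors) enter the idle-qubit analysis, while $J^{st}_{oc}$ and its $d_T$ dependence play no role, consistent with the hypothesis involving only $d_R\geq d$.
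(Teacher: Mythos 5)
Your proposal is correct and follows essentially the same route as the paper: apply Lemma~\ref{lem:distance} to reduce the idle rows of $J^{st}_X$ and $J^{st}_Z$ to the corresponding rows of $\alpha_\perp J_X$ and ${\alpha_\perp^{\mathrm{r}}}^{\mathrm{T}} J_Z$, invoke Theorem~\ref{the} (with the $d_R\geq d$ hypothesis absorbing the $X$-side cap), and use that the idle operators are supported entirely on the register of distance $d$. The only difference is presentational: the paper argues row by row and simply asserts the register-distance bound for the combined code $H_{reg}\oplus(E_q\otimes H_{noi})$, whereas you treat all idle rows at once and spell out the block-diagonal argument showing that bound survives the globally reduced distance $\min\{d,d_{noi}\}$.
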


The idle-qubit logical operator $X_{q+j}$ is represented by the $(q+j)$-th row of $\alpha_\perp J_X$, denoted as $(\alpha_\perp J_X)_{q+j,\bullet}$. The propagation of this operator is represented by the corresponding row in the generator matrix $J^{st}_X$, denoted as $(J^{st}_X)_{q+j,\bullet}$. Since the register has a code distance of $d$, we have $d(H_X, (\alpha_\perp J_X)_{q+j,\bullet}, 1)\geq d$. By applying Lemma~\ref{lem:distance} and using Eq.~(\ref{eq:disX}), we obtain the inequality $d(H^{st}_X, (J^{st}_X)_{q+j,\bullet}, 1)\geq d$. Similarly, for the idle-qubit logical operator $Z_{q+j}$, we have $d(H^{st}_Z, (J^{st}_Z)_{q+j,\bullet}, 1)\geq d$. These inequalities imply that at least $d$ single-qubit physical errors are required to cause a logical error on an idle qubit. 



Note that it is not necessary to demonstrate that the active qubits in the register are fault-tolerant during the joint measurement, as the injected magic states are inherently noisy. 

\begin{corollary}
Errors on injected magic states are effectively independent when $d_R,d_T\geq d$. 
\label{cor:independent}
\end{corollary}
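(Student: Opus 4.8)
The plan is to decompose every logical fault on the injected magic states into an independent part---errors confined to individual, pairwise-disjoint noisy blocks, which by construction are uncorrelated---and a correlated remainder, and then to show the remainder can only arise from at least $d$ physical faults. I would make ``effective independence'' precise as the statement that, for any logical-error pattern $\psi$ supported on a set $T$ of injected states, the relevant error-wise distance satisfies $d(\cdot,\cdot,\psi)\geq\min\{|T|\,d_{noi},\,d\}$. Because the magic states are inherently noisy only at the weight-$d_{noi}$ level, this inequality says that the only way to correlate several states below their independent cost $|T|\,d_{noi}$ is to pay at least $d$ faults, so any genuinely correlated error is suppressed by $p^{d}\ll p^{d_{noi}}$ and is negligible for distillation.

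First I would list the channels through which a magic state acquires errors after the joint measurement. As noted, the single-qubit operations that complete the teleportation introduce no cross-state correlations, so it suffices to control the effective qubit $(X_j x_j, Z_j)$ together with the extracted eigenvalue of $Z_j z_j$ that feeds the correction $V$. The $Z$-type logical errors are recorded by $J^{st}_X$ (the $X_j x_j$ rows), the $X$-type logical errors by $J^{st}_Z$ (the $Z_j$ rows), and the outcome errors by $J^{st}_{oc}$ (the $\alpha J_Z=\{Z_j z_j\}$ rows). For each channel I would apply Theorem~\ref{the} with $d_R,d_T\geq d$ to replace the spacetime error-wise distance by that of the combined original code $(H_X,H_Z)$, which simultaneously caps every ancilla- or outcome-mediated contribution at $d$.

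Next I would exploit the direct-sum structure $H_{X/Z}=H_{reg,X/Z}\oplus(E_q\otimes H_{noi,X/Z})$: any undetectable physical error splits as $e=(f,g_1,\dots,g_q)$ with $f$ on the register and each $g_j$ on the $j$-th noisy block, these supports being pairwise disjoint. Since $X_j x_j$, $Z_j$, and $Z_j z_j$ all factor across this split, the logical action on state $j$ is the sum of a register overlap (of $f$ with $X_j$ or $Z_j$) and a block overlap (of $g_j$ with $x_j$ or $z_j$). Using Lemma~\ref{lem:distance} to isolate the rows indexed by $T$, the decomposition bound then follows from a dichotomy on $f$. For the $J^{st}_X$ and $J^{st}_{oc}$ channels, whose logical operators carry a noisy-block factor ($x_j$ or $z_j$): if $f$ is logically trivial on the register the whole pattern must be carried by the $g_j$, each a logical fault of weight $\geq d_{noi}$ on its own disjoint block---hence weight $\geq|T|\,d_{noi}$ and manifestly independent---while if $f$ is a nontrivial register logical operator then $|f|\geq d$; this yields $d(\cdot,\cdot,\psi)\geq\min\{|T|\,d_{noi},d\}$. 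For the $J^{st}_Z$ channel the pattern lives on the register operators $Z_j$ alone, so it forces $f$ nontrivial and gives $d(\cdot,\cdot,\psi)\geq d$.

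The hard part is the second horn of the dichotomy: certifying that a correlation not realizable as independent single-block faults is forced to route through a register logical operator and therefore costs at least $d$. This is exactly where the hypotheses $d_R,d_T\geq d$ are indispensable, since through Theorem~\ref{the} they rule out the alternative of seeding a cheap correlation either in the code-surgery ancilla (controlled by $d_R$) or in measurement-outcome errors accumulated over the $d_T$ rounds (controlled by $d_T$). Once this is secured the conclusion is immediate: to leading order the magic-state errors are the weight-$d_{noi}$ faults localized to separate noisy blocks and hence independent, while every cross-state correlation requires $\geq d$ faults and is suppressed by $p^{d}$, so the inputs to the distillation protocol are effectively independent.
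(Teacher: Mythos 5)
Your proposal is correct and takes essentially the same route as the paper: use Theorem~\ref{the} (with $d_R,d_T\geq d$) together with Lemma~\ref{lem:distance} to reduce each of the three relevant channels---the $X_jx_j$ rows of $J^{st}_X$, the $Z_j$ rows of $J^{st}_Z$, and $J^{st}_{oc}$---to error-wise distances of the combined original code, and then bound those by $\min\{\vert\psi\vert d_{noi},\,d\}$. Your explicit direct-sum dichotomy on the register component $f$ (trivial register action forces per-block faults of weight $\geq d_{noi}$ each; nontrivial register action costs $\geq d$) is precisely the reasoning the paper leaves implicit when it asserts $d(H_X,(\alpha_\perp J_X)_{1:q},\psi)\geq\vert\psi\vert d_{noi}$ and $d(H_Z,\alpha J_Z,\psi)\geq\vert\psi\vert d_{noi}$, and your version, which keeps the $\min$ against $d$ at that intermediate step, is in fact the more accurate form of those bounds.
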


We define that magic-state errors are effectively independent if and only if at least $\min\{pd_{noi}, d\}$ physical single-qubit errors are required to cause logical errors on $p$ magic states. Typically, the probability of a logical error decreases exponentially with the minimum number of physical errors required to cause the logical error. Consequently, this definition of effective independence suggests that the probability of $p$ magic-state errors decreases exponentially with $p$, until approaching the fault-tolerant level determined by the code distance $d$. 

Next, we prove the effective independence. Focusing on the joint measurement of $Z_j z_j$, there are four fundamental types of logical errors, from which all others are generated: Pauli errors $z_j$ (equivalent to $Z_j$), $x_j$, and $X_j$, as well as incorrect eigenvalue extraction. The error $z_j$ (and equivalently, $Z_j$) flips the logical operator $X_j x_j$ and can be analyzed using Eq.(\ref{eq:disX}). Since the noisy qubits have a code distance of $d_{noi}$, their error-wise distances have the lower bound $d(H_X, (\alpha_\perp J_X)_{1:q}, \psi) \geq \vert \psi \vert d_{noi}$, where the subscript $1:q$ denotes the first $q$ rows of the matrix. Applying Lemma~\ref{lem:distance}, we can find that $d(H^{st}_X, (J^{st}_X)_{1:q}, \psi) \geq \min\{\vert \psi \vert d_{noi}, d\}$. This result confirms that $Z$ logical errors satisfy the criteria for effective independence. 

Among the remaining three types of logical errors, the error $x_j$ is trivial, as it does not introduce any logical error on the magic state. The error $X_j$ flips the logical operator $Z_j$ and can be analyzed in the same way as idle-qubit $Z$ operators. Specifically, since $d(H^{st}_Z, (J^{st}_Z)_{j,\bullet}, 1)\geq d$, at least $d$ physical errors are required to induce an $X_j$ error. If fewer than $d$ physical errors occur, only incorrect eigenvalue extraction needs to be considered. Given that $d(H_Z, \alpha J_Z, \psi)\geq \vert \psi \vert d_{noi}$, it follows that $d(H^{st}_Z, J^{st}_{oc}, \psi) \geq \min\{\vert \psi \vert d_{noi}, d\}$. Thus, $X$ logical errors also satisfy the criterion for effective independence. 

\vspace{0.2cm}

\noindent\textbf{Numerical results} ---
To demonstrate the robustness of idle qubits and independence of magic state errors, we conduct numerical simulations of the joint measurement using a circuit-level error model. As an example, we consider a $[[90,8,10]]$ bivariate bicycle (BB) code \cite{Bravyi2024,Wang2024}. We simulate the joint measurement of operators $Z_1z_1$ and $Z_2z_2$, which act on two of the eight logical qubits in the BB code and two distance-2 surface-code logical qubits (serving as the noisy qubits). With this joint measurement, we can inject magic states into the first and second logical qubits in the register. The joint measurement is implemented via code surgery: We perform parity-check measurements of the deformed code for $d_T = 10$ rounds. In addition, before and after the deformed-code parity-check measurements, we perform parity-check measurements on both the BB code and surface codes for $d_T$ rounds, which provide reliable eigenvalues of their stabilizer operators. These eigenvalues are required for correcting errors during the deformed-code parity-check measurements. Logical error rates are evaluated using the Monte Carlo simulation; see Appendix~\ref{app:simulations} for a detailed description. The results are shown in Fig.~\ref{fig:plots}. 

For idle qubits, we observe an effective distance of $d_{cir} \approx 5$ [see Fig.~\ref{fig:plots}(a)], which is consistent with the expected behavior of a code with distance ten. We also report the error rates of idle qubits without performing code surgery, and find that the effective distance remains unchanged in both cases. This confirms that the fault tolerance of idle qubits is preserved during code surgery. 

For active qubits, we evaluate logical errors that could affect the injected magic states. As shown in Fig.~\ref{fig:plots}(b), the numerical results indicate that $\mathrm{Pr}(A \cap B) = \mathrm{Pr}(A)\mathrm{Pr}(B)$ and $\mathrm{Pr}(C \cap D) = \mathrm{Pr}(C)\mathrm{Pr}(D)$, demonstrating that errors affecting the magic states occur independently. 

\begin{figure}[t]
\centering
\includegraphics[width=\linewidth]{\figpath/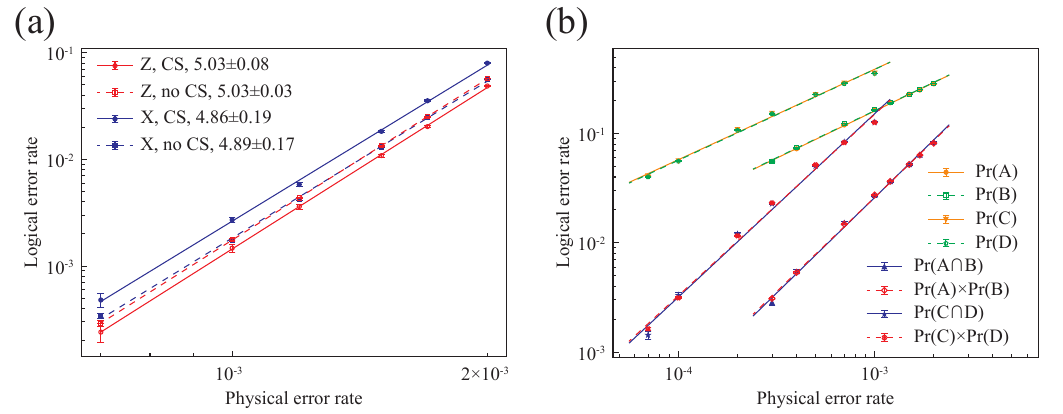}
\caption{
(a) Logical error rates of idle qubits with and without the code surgery (CS) performed. A logical error rate $p_L$ scales with the physical error rate per gate $p$ as $p_L = \alpha p^{d_{cir}}$, where the effective distance $d_{cir}$ is extracted by fitting and shown in the legend. 
(b) Logical error rates associated with faults on the injected magic states. Events $A$ and $B$ ($C$ and $D$) correspond to $Z$ ($X$) errors on the first and second injected magic states, respectively. 
}
\label{fig:plots}
\end{figure}

\vspace{0.2cm}

\noindent\textbf{Conclusions} ---
In this work, we proposed a magic-state injection scheme for qLDPC codes that minimizes time cost while maintaining constant qubit overhead. To support and justify this scheme, we developed a set of theoretical tools that enable fine-grained analysis of different error types arising during code surgery. These tools are broadly applicable and can be extended to protocols with non-uniform code distances; for example, assigning lower-distance codes to qubits that have limited influence on the final observable of a quantum circuit. 

By reducing the injection time cost to $\tilde{O}(d^2)$, our scheme enables universal fault-tolerant quantum computation at this speed using code surgery techniques, providing a practical path forward for efficient, low-overhead, fault-tolerant quantum computing. 

\begin{acknowledgments}
We would like to thank Hengyun Zhou and Qian Xu for providing numerical simulation programs and decoding packages, which provided important support for the smooth progress of this study.
This work is supported by the National Natural Science Foundation of China (Grant Nos. 12225507, 12088101) and NSAF (Grant No. U1930403). The source codes for the numerical simulation are available at \cite{code}. Yuanye Zhu and Xiao Yuan is supported by the Innovation Program for Quantum Science and Technology (Grant No.~2023ZD0300200), the National Natural Science Foundation of China Grant (No.~12175003 and No.~12361161602),  NSAF (Grant No.~U2330201). 
\end{acknowledgments}

\bibliography{references.bib}

\appendix

\begin{widetext}

\section{Impact of error correlations in magic state distillation}
\label{app:distillation}

One of the two key problems addressed in this work is the potential correlation between injected magic states. Using Corollary~\ref{cor:independent} and supporting numerical results, we demonstrate that the errors associated with the injected magic states remain statistically independent. This satisfies a critical requirement for subsequent magic state distillation, as all existing distillation protocols rely on the assumption that raw magic states exhibit independent errors. 

\begin{figure}[htbp]
\centering
\includegraphics[width=0.48\linewidth]{\figpath/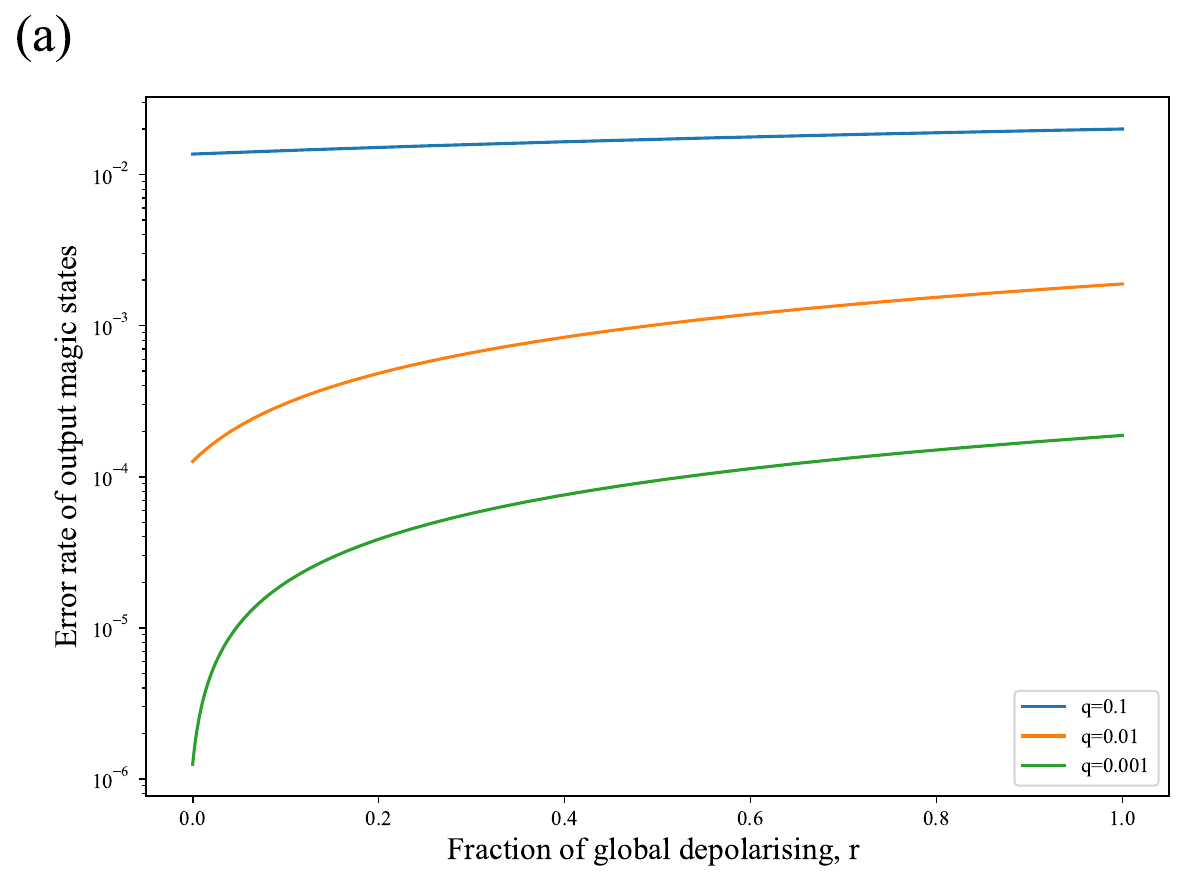}
\includegraphics[width=0.48\linewidth]{\figpath/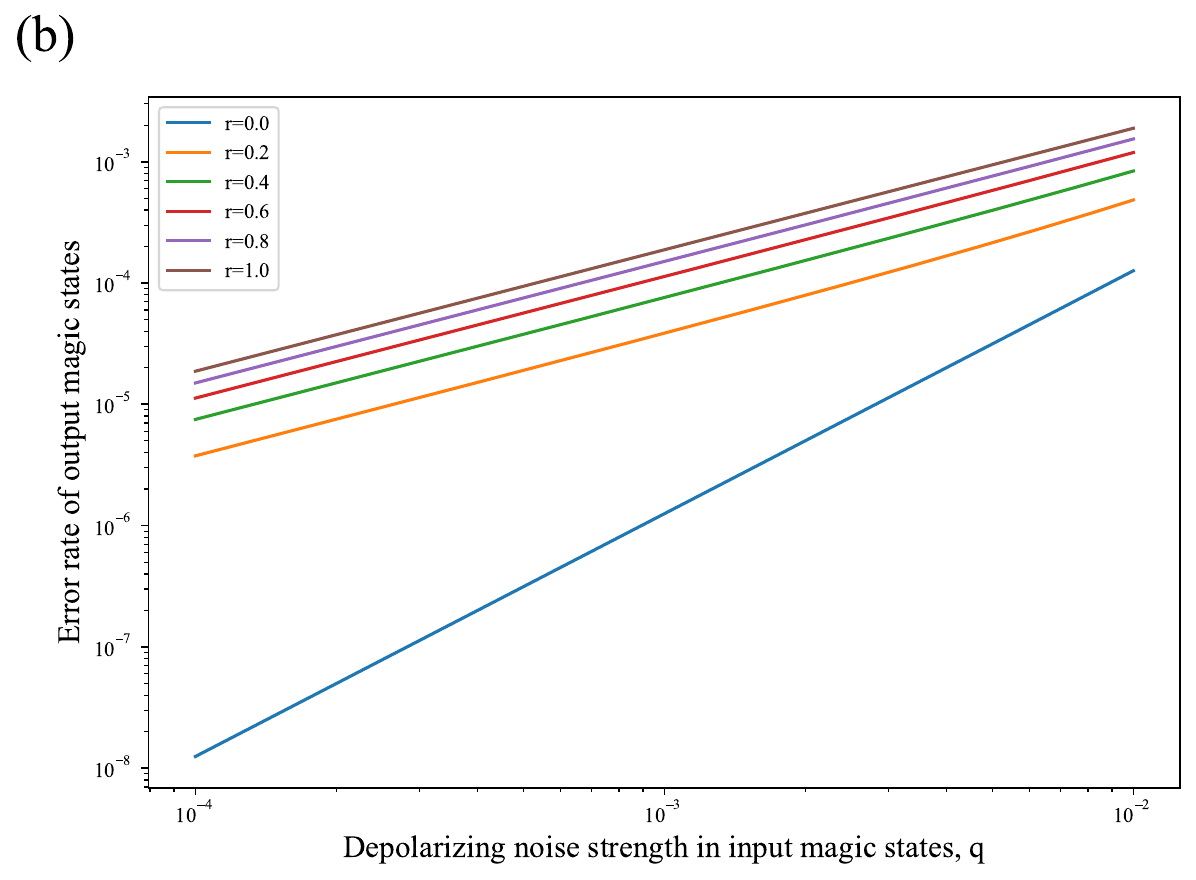}
\caption{
(a) Output error rate after magic state distillation as a function of the fraction of global depolarizing noise. 
(b) Distillation efficiency under uncorrelated and correlated noises. 
}
\label{fig:msd}
\end{figure}

Within the broader framework of magic-state-based fault-tolerant quantum computation, the final computational fidelity depends on both the magic state injection and the subsequent distillation processes. In our proposed scheme, noisy magic states are first injected into a fault-tolerant register, after which distillation is performed using fault-tolerant operations. Since the error rates of these logical operations are significantly lower than those of the injected magic states, they can be considered negligible for the purposes of our analysis. Consequently, our simulation focuses exclusively on the errors of injected magic states, with particular attention to the effects of correlated errors on the final output fidelity. To quantitatively evaluate the impact of such correlations on the effectiveness of distillation, we conducted the following simulations. 

We employ the standard 5-to-1 magic state distillation protocol~\cite{Bravyi2005,jochym2012study}, where five qubits are initialized in noisy magic states affected by both single-qubit depolarizing noise and global depolarizing noise. The global depolarizing noise introduces correlated errors. The input five-qubit state reads 
\begin{equation} 
\rho = (1-r) C_1^{\otimes 5} + r C_5,
\end{equation}
where $r$ denotes the fraction of global depolarizing noise, 
\begin{equation}
C_m = (1-q) A^{\otimes m} + q B_m,
\end{equation}
is a state of $m$ qubits prepared in the magic state and affected by $m$-qubit depolarizing noise, $A$ is the ideal magic state, $B_m$ is the maximally mixed state of $m$ qubits, and $q$ is the depolarizing rate. With the parameter $r$, we can model the interplay between independent and correlated noise among the five input qubits. After distillation, we calculate the error rate of the output magic state. Let $D$ be the output state. The error rate is 
\begin{equation} 
\text{Error rate} = 1 - \mathrm{Tr}(A D).
\end{equation}

The results are shown in Fig.~\ref{fig:msd}(a): the horizontal axis indicates the fraction of global depolarizing in the input states before distillation, the vertical axis shows the error rate of the output state after distillation, and different colors correspond to varying depolarizing noise levels. The data reveal that as the degree of correlated noise in the input increases, the output error rate after distillation increases significantly. 

To further illustrate the impact of correlated errors, Fig.~\ref{fig:msd}(b) shows results where the horizontal axis represents the depolarizing noise strength in the input magic states, the vertical axis indicates the output error rate after distillation, and different colors correspond to varying fractions of correlated errors. The results clearly show that as the level of correlated noise increases, the slope of the output error rate versus input depolarizing noise decreases from 2 to 1. This behavior indicates a qualitative shift in the performance of the distillation protocol: Once correlated errors are present, the protocol loses its characteristic $p$ to $p^2$ error suppression associated with independent noise, resulting in a reduced distillation efficiency. 

The numerical results clearly highlight the critical importance of error independence among injected magic states for achieving efficient distillation and high output fidelity. Our injection scheme proposed in this work ensures that errors across the injected states remain statistically independent. As a result, when combined with magic state distillation protocols, our scheme enables the preparation of high-fidelity magic states. 

\section{Error-wise distance --- Proof of Lemma \ref{lem:distance}}
\label{app:lemma}

If $\{e\in\mathrm{ker}H : \phi Je^\mathrm{T} = \psi^\mathrm{T}\} \neq \emptyset$, 
\begin{eqnarray}
d(H,\phi J,\psi) &=& \min\{\vert e\vert : e\in\mathrm{ker}H, \phi Je^\mathrm{T} = \psi^\mathrm{T}\} \notag \\
&=& \min\{\vert e\vert : e\in\mathrm{ker}H, \psi_\star^\mathrm{T} = Je^\mathrm{T}, \phi\psi_\star^\mathrm{T} = \psi^\mathrm{T}\} \notag \\
&=& \min\{d(H,J,\psi_\star) : \psi_\star\in\mathbb{F}_2^k, \phi\psi_\star^\mathrm{T} = \psi^\mathrm{T}\}.
\end{eqnarray}
If $\{e\in\mathrm{ker}H : \phi Je^\mathrm{T} = \psi^\mathrm{T}\} = \emptyset$, $\{e\in\mathrm{ker}H : Je^\mathrm{T} = \psi_\star^\mathrm{T}\} = \emptyset$ for all $\psi_\star\in\mathbb{F}_2^k$ with $\phi\psi_\star^\mathrm{T} = \psi^\mathrm{T}$; therefore, $d_P(H,J,\psi_\star) = \infty$. 

\section{Code surgery on qLDPC codes}
\label{app:measurement}

\subsection{Code surgery }

Let $H_X\in\mathbb{F}_2^{r_X\times n}$ and $H_Z\in\mathbb{F}_2^{r_Z\times n}$ be the $X$-operator and $Z$-operator check matrices of a CSS code, respectively: $H_XH_Z^\mathrm{T} = 0$. Let $J_X\in\mathbb{F}_2^{k\times n}$ and $J_Z\in\mathbb{F}_2^{k\times n}$ be the generator matrices of $X$ and $Z$ logical operators, respectively: $H_XJ_Z^\mathrm{T} = H_ZJ_X^\mathrm{T} = 0$ and $J_XJ_Z^\mathrm{T} = E_k$. 

Given a CSS code $(H_X,H_Z,J_X,J_Z)$, we can measure logical operators $\alpha J_Z$ for any full-rank matrix $\alpha\in\mathbb{F}_2^{q\times k}$ with $q\leq k$ using the following protocol: 
\begin{itemize}
\item[1.] Initialize all qubits in the ancilla system in the $X$ basis; 
\item[2.] Implement parity-check measurements of the following two check matrices for $d_T$ rounds, 
\begin{eqnarray}
\bar{H}_X &=& \left(\begin{array}{cc}
H_X & T \\
0 & H_M
\end{array}\right)
\end{eqnarray}
and 
\begin{eqnarray}
\bar{H}_Z &=& \left(\begin{array}{cc}
H_Z & 0 \\
S & H_G^\mathrm{T}
\end{array}\right),
\end{eqnarray}
where $S\in\mathbb{F}_2^{n_G\times n}$, $T\in\mathbb{F}_2^{r_X\times r_G}$, $H_G\in\mathbb{F}_2^{r_G\times n_G}$ and $H_M\in\mathbb{F}_2^{r_M\times r_G}$ satisfy conditions i) $H_XS^\mathrm{T} = TH_G$, ii) $H_MH_G = 0$, and iii) there exists $R\in\mathbb{F}_2^{n\times n_G}$ such that $\alpha J_ZRS = \alpha J_Z$ and $H_G(\alpha J_ZR)^\mathrm{T} = 0$; and iv) there exists $\beta\in\mathbb{F}_2^{(k-q)\times r_G}$ such that $\alpha_\perp J_XS^\mathrm{T} = \beta H_G$, where $\alpha_\perp\in\mathbb{F}_2^{(k-q)\times k}$ is a full-rank matrix with $\alpha_\perp\alpha^\mathrm{T} = 0$; 
\item[3.] Measure all qubits in the ancilla system in the $X$ basis. 
\end{itemize}
Notice that 
\begin{eqnarray}
\alpha J_ZR\left(\begin{array}{cc}
S & H_G^\mathrm{T}
\end{array}\right) = \left(\begin{array}{cc}
\alpha J_Z & 0
\end{array}\right).
\end{eqnarray}
Let $\mu\in\mathbb{F}_2^{r_X+n_G}$ be the outcomes in the first round (or any other rounds) of $\bar{H}_X$ parity-check measurements. The measurement outcomes of logical operators $\alpha J_Z$ are $\alpha J_ZR\gamma_2 \mu^\mathrm{T}$, where 
\begin{eqnarray}
\gamma_1 = \left(\begin{array}{cc}
E_{r_Z} & 0
\end{array}\right)
\end{eqnarray}
and 
\begin{eqnarray}
\gamma_2 = \left(\begin{array}{cc}
0 & E_{n_G}
\end{array}\right).
\end{eqnarray}

\subsection{Measurement sticker as a $(d_R, S)$-bounded pair}

In this section, we provide a proof of the $(d_R, S)$-bound in the {\it devised sticking} protocol~\cite{Zhang2025}. In devised sticking, a check matrix $H_G \in \mathbb{F}_2^{r_G \times n_G}$ is specifically designed to realize the desired simultaneous logical measurement on the original code $(H_X,H_Z,J_X,J_Z)$, and the corresponding code is referred to as the glue code. The glue code is said to be compatible with the original code if and only if there exist pasting matrices $S$ and $T$ such that $H_X S^\mathrm{T} = T H_G$. Note that we have used the same notations in the expressions of the deformed code, because these matrices satisfy the same conditions. We can always construct $H_G$ such that $\Vert S \Vert = 1$. Here, $\Vert \bullet \Vert$ is the matrix norm induced by Hamming weight, i.e., 
\begin{eqnarray}
\Vert S \Vert = \max\left\{ \frac{\vert uS \vert}{\vert u \vert} : u \neq 0 \right\}.
\end{eqnarray}
In the code surgery, we take deformed-code check matrices 
\begin{eqnarray}
\bar{H}_X &=& \left(\begin{array}{cc}
H_X & \tilde{T} \\
0 & \tilde{H}_M
\end{array}\right)
\end{eqnarray}
and 
\begin{eqnarray}
\bar{H}_Z &=& \left(\begin{array}{cc}
H_Z & 0 \\
\tilde{S} & \tilde{H}_G^\mathrm{T}
\end{array}\right),
\end{eqnarray}
where 
\begin{eqnarray}
\tilde{H}_G &=& \begin{pmatrix} \lambda_{d_R}^\mathrm{T} \otimes E_{n_G} & E_{d_R} \otimes H_G^\mathrm{T} \end{pmatrix}, \\
\tilde{H}_M &=& \begin{pmatrix} E_{d_R-1} \otimes H_G & \lambda_{d_R} \otimes E_{r_G} \end{pmatrix},
\end{eqnarray}
and 
\begin{eqnarray}
\tilde{S} = e_1 \otimes S,
\end{eqnarray}
where $e_1$ is the $d_R$-dimensional column vector with one in the first entry and zero elsewhere. 

\begin{lemma}
The pair $(\tilde{H}_G, \tilde{H}_M)$ is $(d_R, \tilde{S})$-bounded. 
\end{lemma}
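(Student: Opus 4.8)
The plan is to recognize $(\tilde{H}_G,\tilde{H}_M)$ as the homological (tensor) product of the original glue complex $\mathbb{F}_2^{n_G}\xrightarrow{H_G}\mathbb{F}_2^{r_G}\xrightarrow{H_M}\mathbb{F}_2^{r_M}$ with the length-$d_R$ repetition-code complex $\mathbb{F}_2^{d_R}\xrightarrow{\lambda_{d_R}}\mathbb{F}_2^{d_R-1}$, so that every object inherits a grading by the $d_R$ repetition layers. First I would split a candidate vector $v$ on the thickened ancilla as $v=(v_a,v_b)$, where $v_a$ carries $d_R-1$ layers $v_a^{(i)}\in\mathbb{F}_2^{n_G}$ and $v_b$ carries $d_R$ layers $v_b^{(j)}\in\mathbb{F}_2^{r_G}$, and likewise write the sought preimage $u_\star$ as $d_R$ layers $u_\star^{(j)}\in\mathbb{F}_2^{n_G}$. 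In this language the equation $\tilde{H}_G u_\star^\mathrm{T}=v^\mathrm{T}$ decouples into a repetition part $u_\star^{(i)}+u_\star^{(i+1)}=v_a^{(i)}$ and a glue part $H_G u_\star^{(j)\mathrm{T}}=v_b^{(j)\mathrm{T}}$, while the hypothesis $\tilde{H}_M v^\mathrm{T}=0$ collapses to the telescoping relation $v_b^{(i+1)\mathrm{T}}=v_b^{(i)\mathrm{T}}+H_G v_a^{(i)\mathrm{T}}$ for $i=1,\dots,d_R-1$.

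The core of the argument is then a pigeonhole step. Since $v_b$ has $d_R$ layers and $\vert v\vert<d_R$ forces $\vert v_b\vert<d_R$, some layer $v_b^{(j_0)}$ must vanish identically. I would anchor the construction there: set $u_\star^{(j_0)}=0$ and propagate to every other layer through the repetition recurrence $u_\star^{(i+1)}=u_\star^{(i)}+v_a^{(i)}$. By construction this satisfies the repetition part of $\tilde{H}_G u_\star^\mathrm{T}=v^\mathrm{T}$; the glue part then follows for all layers by a short induction combining this recurrence, the telescoped form of $\tilde{H}_M v^\mathrm{T}=0$, and the vanishing $v_b^{(j_0)}=0$. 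Hence $u_\star$ is a genuine preimage of $v$ under $\tilde{H}_G$.

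It remains to bound $\vert u_\star\tilde{S}\vert$. Because $\tilde{S}=e_1\otimes S$ couples only the first repetition layer to the data block, one has $u_\star\tilde{S}=u_\star^{(1)}S$, so only the weight of the single layer $u_\star^{(1)}$ is relevant. Unwinding the recurrence gives $u_\star^{(1)}=\sum_{i<j_0}v_a^{(i)}$, a sum of layers of $v_a$, whence $\vert u_\star^{(1)}\vert\le\vert v_a\vert\le\vert v\vert$. Invoking the construction's guarantee $\Vert S\Vert=1$ yields $\vert u_\star\tilde{S}\vert=\vert u_\star^{(1)}S\vert\le\Vert S\Vert\,\vert u_\star^{(1)}\vert\le\vert v\vert$, which is precisely the bound demanded by Definition~\ref{def:bound}.

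The main obstacle I anticipate is bookkeeping rather than conceptual: one must fix a single convention for the Kronecker orderings in $\tilde{H}_G$, $\tilde{H}_M$, and $\tilde{S}$ and check that the layerwise decoupling is faithful to it, including verifying the chain condition $\tilde{H}_M\tilde{H}_G=0$, which reduces to the cancellation $H_G\otimes\lambda_{d_R}+H_G\otimes\lambda_{d_R}=0$ over $\mathbb{F}_2$. The only load-bearing hypotheses are $\vert v\vert<d_R$, which supplies the vanishing layer via pigeonhole, and $\Vert S\Vert=1$, which localizes and bounds the data weight; the repetition-thickening structure does the rest, and notably the argument requires no distance or expansion assumption on the original pair $(H_G,H_M)$ itself.
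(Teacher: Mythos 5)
Your proposal is correct and takes essentially the same route as the paper's proof: the same pigeonhole step (since $\vert v\vert < d_R$, some column $B_{\bullet,j}$ of $B$, i.e.\ your layer $v_b^{(j_0)}$, vanishes), the same anchored preimage (the paper's explicit formulas $U_{\bullet,1:j-1} = A_{\bullet,1:j-1}[(\lambda_{d_R}^\mathrm{T})_{1:j-1,1:j-1}]^{-1}$, with the all-ones lower-triangular inverse, are exactly the closed form of your recurrence $u_\star^{(i+1)} = u_\star^{(i)} + v_a^{(i)}$ anchored at $u_\star^{(j_0)} = 0$, giving $u_\star^{(1)} = \sum_{i<j_0} v_a^{(i)}$), and the same weight bound via $\tilde{S} = e_1 \otimes S$ and $\Vert S \Vert = 1$. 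The only difference is presentational: the paper vectorizes into matrix equations and inverts bidiagonal blocks, while you run the equivalent layerwise telescoping induction.
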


\begin{proof}

Let $v = \left(\begin{array}{cc}a & b\end{array}\right)$. Then, the condition $\tilde{H}_M v^\mathrm{T} = 0$ becomes 
\begin{eqnarray}
(E_{d_R-1} \otimes H_G) a^\mathrm{T} &=& (\lambda_{d_R} \otimes E_{r_G}) b^\mathrm{T}.
\end{eqnarray}
Let's introduce matrices $A$, $B$, and $U$ such that 
\begin{eqnarray}
a^\mathrm{T} &=& \mathrm{vec}(A), \\
b^\mathrm{T} &=& \mathrm{vec}(B), \\
u_\star^\mathrm{T} &=& \mathrm{vec}(U).
\end{eqnarray}
We can rewrite the condition as 
\begin{eqnarray}
H_G A &=& B \lambda_{d_R}^\mathrm{T}.
\label{eq:AB}
\end{eqnarray}
Similarly, we can rewrite $v^\mathrm{T} = \tilde{H}_G u_\star^\mathrm{T}$ as 
\begin{eqnarray}
A &=& U \lambda_{d_R}^\mathrm{T}, \label{eq:A} \\
B &=& H_G U. \label{eq:B}
\end{eqnarray}

Because $\vert v \vert < d_R$, at least one column of $B$ is zero, i.e.,~it is in the form 
\begin{eqnarray}
B &=& \left(\begin{array}{ccc}
B_{\bullet,1:j-1} & 0 & B_{\bullet,j+1:d_R}
\end{array}\right).
\end{eqnarray}
Accordingly, we take the matrix $U$ in the form 
\begin{eqnarray}
U &=& \left(\begin{array}{ccc}
U_{\bullet,1:j-1} & 0 & U_{\bullet,j+1:d_R}
\end{array}\right).
\end{eqnarray}
Then, Eq.~(\ref{eq:A}) becomes 
\begin{eqnarray}
\left(\begin{array}{cc}
A_{\bullet,1:j-1} & A_{\bullet,j:d_R-1}
\end{array}\right) &=& \left(\begin{array}{ccc}
U_{\bullet,1:j-1} & 0 & U_{\bullet,j+1:d_R}
\end{array}\right) \left(\begin{array}{cc}
(\lambda_{d_R}^\mathrm{T})_{1:j-1,1:j-1} & 0 \\
(\lambda_{d_R}^\mathrm{T})_{j,1:j-1} & (\lambda_{d_R}^\mathrm{T})_{j,j:d_R-1} \\
0 & (\lambda_{d_R}^\mathrm{T})_{j+1:d_R,j:d_R-1}
\end{array}\right).
\end{eqnarray}
Because $(\lambda_{d_R}^\mathrm{T})_{1:j-1,1:j-1}$ and $(\lambda_{d_R}^\mathrm{T})_{j+1:d_R,j:d_R-1}$ are invertible, we have 
\begin{eqnarray}
U_{\bullet,1:j-1} &=& A_{\bullet,1:j-1} [(\lambda_{d_R}^\mathrm{T})_{1:j-1,1:j-1}]^{-1}, \\
U_{\bullet,j+1:d_R} &=& A_{\bullet,j:d_R-1} [(\lambda_{d_R}^\mathrm{T})_{j+1:d_R,j:d_R-1}]^{-1},
\end{eqnarray}
i.e., 
\begin{eqnarray}
U &=& \left(\begin{array}{ccc}
A_{\bullet,1:j-1} [(\lambda_{d_R}^\mathrm{T})_{1:j-1,1:j-1}]^{-1} & 0 & A_{\bullet,j:d_R-1} [(\lambda_{d_R}^\mathrm{T})_{j+1:d_R,j:d_R-1}]^{-1}
\end{array}\right) \\
&=& A \left(\begin{array}{ccc}
[(\lambda_{d_R}^\mathrm{T})_{1:j-1,1:j-1}]^{-1} & 0 & 0 \\
0 & 0 & [(\lambda_{d_R}^\mathrm{T})_{j+1:d_R,j:d_R-1}]^{-1}
\end{array}\right)
\end{eqnarray}
Using Eq.~(\ref{eq:AB}), we can verify that Eq.~(\ref{eq:B}) holds: 
\begin{eqnarray}
\left(\begin{array}{ccc}
B_{\bullet,1:j-1} & 0 & B_{\bullet,j+1:d_R}
\end{array}\right) &=& H_G A \left(\begin{array}{ccc}
[(\lambda_{d_R}^\mathrm{T})_{1:j-1,1:j-1}]^{-1} & 0 & 0 \\
0 & 0 & [(\lambda_{d_R}^\mathrm{T})_{j+1:d_R,j:d_R-1}]^{-1}
\end{array}\right) \\
&=& B \lambda_{d_R}^\mathrm{T} \left(\begin{array}{ccc}
[(\lambda_{d_R}^\mathrm{T})_{1:j-1,1:j-1}]^{-1} & 0 & 0 \\
0 & 0 & [(\lambda_{d_R}^\mathrm{T})_{j+1:d_R,j:d_R-1}]^{-1}
\end{array}\right) \\
&=& \left(\begin{array}{cc}
B_{\bullet,1:j-1} (\lambda_{d_R}^\mathrm{T})_{1:j-1,1:j-1} & B_{\bullet,j+1:d_R} (\lambda_{d_R}^\mathrm{T})_{j+1:d_R,j:d_R-1}
\end{array}\right) \\
&&\times \left(\begin{array}{ccc}
[(\lambda_{d_R}^\mathrm{T})_{1:j-1,1:j-1}]^{-1} & 0 & 0 \\
0 & 0 & [(\lambda_{d_R}^\mathrm{T})_{j+1:d_R,j:d_R-1}]^{-1}
\end{array}\right).
\end{eqnarray}
Therefore, there exists $u_\star$ such that $v^\mathrm{T} = \tilde{H}_G u_\star^\mathrm{T}$. 
Consider the explicit form 
\begin{eqnarray}
(\lambda_{d_R}^\mathrm{T})_{1:j-1,1:j-1} = \left(\begin{array}{cccccc}
1 & 0 & 0 & \cdots & 0 & 0 \\
1 & 1 & 0 & \cdots & 0 & 0 \\
0 & 1 & 1 & \cdots & 0 & 0 \\
\vdots & \vdots & \vdots & \ddots & \vdots & \vdots \\
0 & 0 & 0 & \cdots & 1 & 0 \\
0 & 0 & 0 & \cdots & 1 & 1
\end{array}\right)
\end{eqnarray}
we have the inverse matrix 
\begin{eqnarray}
[(\lambda_{d_R}^\mathrm{T})_{1:j-1,1:j-1}]^{-1} &=& \left(\begin{array}{cccccc}
1 & 0 & 0 & \cdots & 0 & 0 \\
1 & 1 & 0 & \cdots & 0 & 0 \\
1 & 1 & 1 & \cdots & 0 & 0 \\
\vdots & \vdots & \vdots & \ddots & \vdots & \vdots \\
1 & 1 & 1 & \cdots & 1 & 0 \\
1 & 1 & 1 & \cdots & 1 & 1
\end{array}\right)
\end{eqnarray}

Note that 
\begin{eqnarray}
\tilde{S}^\mathrm{T} u_\star^\mathrm{T} &=& \mathrm{vec}(S^\mathrm{T} U G_R^\mathrm{r}) = (S^\mathrm{T} U)_{\bullet,1} = S^\mathrm{T} U_{\bullet,1} = S^\mathrm{T} \sum_{i=1}^{j-1} A_{\bullet,i}.
\end{eqnarray}
Therefore, 
\begin{eqnarray}
\vert u_\star \tilde{S} \vert \leq \Vert S \Vert \times \vert a \vert \leq \Vert S \Vert \times \vert v \vert = \vert v \vert.
\end{eqnarray}
The lemma has been proved.
\end{proof}

\section{Error-wise distance of the deformed code}
\label{app:deformed}

\begin{lemma}
For the deformed code $(\bar{H}_X,\bar{H}_Z)$, the generator matrices of $X$ and $Z$ logical operators are 
\begin{eqnarray}
\bar{J}_X = \left(\begin{array}{cc}
\alpha_\perp J_X & \beta
\end{array}\right)
\end{eqnarray}
and 
\begin{eqnarray}
\bar{J}_Z = \left(\begin{array}{cc}
{\alpha_\perp^\mathrm{r}}^\mathrm{T} J_Z & 0
\end{array}\right),
\end{eqnarray}
respectively. If $(H_G,H_M)$ is $(d_R,S)$-bounded, the error-wise distances of $Z$ and $X$ logical errors are 
\begin{eqnarray}
d(\bar{H}_X,\bar{J}_X,\psi) &\geq& \min\{d(H_X,\alpha_\perp J_X,\psi),d_R\}
\end{eqnarray}
and 
\begin{eqnarray}
d(\bar{H}_Z,\bar{J}_Z,\psi) &\geq& d(H_Z,{\alpha_\perp^\mathrm{r}}^\mathrm{T} J_Z,\psi),
\end{eqnarray}
respectively. 
\label{lem:deformed}
\end{lemma}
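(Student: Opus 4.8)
The plan is to split the lemma into two parts: first confirming that $\bar J_X$ and $\bar J_Z$ are valid generator matrices for the deformed code's logical operators, and then establishing the two error-wise distance bounds. Throughout I would decompose any physical error on the $n+r_G$ qubits of the deformed code as $e = (e_1, e_2)$, with $e_1 \in \mathbb{F}_2^n$ supported on the original block and $e_2 \in \mathbb{F}_2^{r_G}$ on the ancilla system, so that the block structure of $\bar H_X$ and $\bar H_Z$ resolves every syndrome and logical condition into a register part and an ancilla part.

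For the generator matrices I would verify the defining CSS relations by direct computation. The commutation $\bar H_Z \bar J_X^\mathrm{T} = 0$ reduces, block by block, to $H_Z J_X^\mathrm{T}=0$ (top) and to $S(\alpha_\perp J_X)^\mathrm{T} = H_G^\mathrm{T}\beta^\mathrm{T}$ (bottom), the latter being exactly the transpose of condition iv); likewise $\bar H_X \bar J_Z^\mathrm{T}=0$ follows from $H_X J_Z^\mathrm{T}=0$. The symplectic pairing $\bar J_X \bar J_Z^\mathrm{T} = \alpha_\perp J_X J_Z^\mathrm{T}\alpha_\perp^\mathrm{r} = \alpha_\perp\alpha_\perp^\mathrm{r} = E_{k-q}$ uses $J_X J_Z^\mathrm{T}=E_k$ together with the right-inverse property; this pairing simultaneously shows the $k-q$ rows are linearly independent and genuinely logical, and a dimension count of the deformed code (the construction encodes $k-q$ logical qubits) shows the set is complete.

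For the $X$-logical bound $d(\bar H_Z,\bar J_Z,\psi)\ge d(H_Z,{\alpha_\perp^\mathrm{r}}^\mathrm{T}J_Z,\psi)$, I would observe that $e\in\mathrm{ker}\,\bar H_Z$ forces $H_Z e_1^\mathrm{T}=0$, while $\bar J_Z e^\mathrm{T} = {\alpha_\perp^\mathrm{r}}^\mathrm{T}J_Z\, e_1^\mathrm{T}$ depends on $e_1$ alone because the ancilla block of $\bar J_Z$ vanishes. Hence every admissible $e$ restricts to an $e_1\in\mathrm{ker}\,H_Z$ realizing the same logical pattern $\psi$, with $\vert e\vert \ge \vert e_1\vert \ge d(H_Z,{\alpha_\perp^\mathrm{r}}^\mathrm{T}J_Z,\psi)$; taking the minimum over $e$ gives the claim.

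The crux is the $Z$-logical bound $d(\bar H_X,\bar J_X,\psi)\ge\min\{d(H_X,\alpha_\perp J_X,\psi),d_R\}$, where the ancilla part $e_2$ now enters the logical condition $\alpha_\perp J_X e_1^\mathrm{T}+\beta e_2^\mathrm{T}=\psi^\mathrm{T}$ and must be controlled. I would split on $\vert e_2\vert$: if $\vert e_2\vert\ge d_R$ then $\vert e\vert\ge d_R$ immediately. Otherwise, $H_M e_2^\mathrm{T}=0$ (the bottom kernel condition) together with $\vert e_2\vert<d_R$ lets me invoke the $(d_R,S)$-bounded property to obtain $u_\star$ with $e_2^\mathrm{T}=H_G u_\star^\mathrm{T}$ and $\vert u_\star S\vert\le\vert e_2\vert$. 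The key move is to define the purely register-supported error $e_1' = e_1 + u_\star S$ and check, using condition i) ($H_X S^\mathrm{T}=T H_G$) and condition iv) ($\alpha_\perp J_X S^\mathrm{T}=\beta H_G$), that $H_X {e_1'}^\mathrm{T}=H_X e_1^\mathrm{T}+T e_2^\mathrm{T}=0$ and $\alpha_\perp J_X {e_1'}^\mathrm{T}=\alpha_\perp J_X e_1^\mathrm{T}+\beta e_2^\mathrm{T}=\psi^\mathrm{T}$, so $e_1'$ is admissible for the original code with the same logical effect. Since $\vert e_1'\vert\le\vert e_1\vert+\vert u_\star S\vert\le\vert e_1\vert+\vert e_2\vert=\vert e\vert$, we obtain $\vert e\vert\ge d(H_X,\alpha_\perp J_X,\psi)$ in this case, and combining the two cases yields the bound after minimizing over $e$. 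I expect this final step---recognizing that $e_1+u_\star S$ is the correct replacement and that the construction conditions conspire to keep it undetectable with unchanged logical action---to be the main obstacle, with everything else reducing to bookkeeping on the block decomposition.
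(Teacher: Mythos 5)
Your proposal is correct and follows essentially the same route as the paper's proof: the same block decomposition of the error into register and ancilla parts, the same one-sided argument for the $\bar H_Z$ bound (using only $H_Z e_1^\mathrm{T}=0$ and the vanishing ancilla block of $\bar J_Z$), and the same key step for the $\bar H_X$ bound---casing on $\vert e_2\vert$ versus $d_R$, invoking the $(d_R,S)$-bounded property to get $u_\star$, and replacing the error by $e_1+u_\star S$, whose undetectability and unchanged logical action follow from conditions i) and iv) exactly as in the paper. Your verification of the generator-matrix relations is slightly more explicit than the paper's (which asserts them), but the substance is identical.
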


\begin{proof}
The generator matrices satisfy conditions $\bar{H}_X\bar{J}_Z^\mathrm{T} = \bar{H}_Z\bar{J}_X^\mathrm{T} = 0$ and $\bar{J}_X\bar{J}_Z^\mathrm{T} = E_{k-q}$, i.e. they are valid generator matrices of the deformed code. 

Let $e = \left(\begin{array}{cc}u & v\end{array}\right)$ be a logical $X$ error satisfying $\bar{H}_Ze^\mathrm{T} = 0$ and $\bar{J}_Ze^\mathrm{T} = \psi^\mathrm{T}$, i.e. 
\begin{eqnarray}
H_Zu^\mathrm{T} &=& 0
\end{eqnarray}
and 
\begin{eqnarray}
{\alpha_\perp^\mathrm{r}}^\mathrm{T} J_Zu^\mathrm{T} = \psi^\mathrm{T}.
\end{eqnarray}
Therefore, 
\begin{eqnarray}
\vert e\vert \geq \vert u\vert \geq d(H_Z,{\alpha_\perp^\mathrm{r}}^\mathrm{T} J_Z,\psi).
\end{eqnarray}
The inequality of $d(\bar{H}_Z,\bar{J}_Z,\psi)$ has been proved. 

Let $e = \left(\begin{array}{cc}u & v\end{array}\right)$ be a logical $Z$ error satisfying $\bar{H}_Xe^\mathrm{T} = 0$ and $\bar{J}_Xe^\mathrm{T} = \psi^\mathrm{T}$, i.e. 
\begin{eqnarray}
H_Xu^\mathrm{T} &=& Tv^\mathrm{T}, \\
H_Mv^\mathrm{T} &=& 0,
\end{eqnarray}
and 
\begin{eqnarray}
\alpha_\perp J_Xu^\mathrm{T} + \beta v^\mathrm{T} = \psi^\mathrm{T}.
\end{eqnarray}
If $\vert v\vert < d_R$, there exists $u_\star\in\mathbb{F}^{n_G}$ such that $v^\mathrm{T} = H_Gu_\star^\mathrm{T}$ and $\vert u_\star S\vert\leq \vert v\vert$. Then, 
\begin{eqnarray}
H_Xu^\mathrm{T} &=& TH_Gu_\star^\mathrm{T}
\end{eqnarray}
and 
\begin{eqnarray}
\alpha_\perp J_Xu^\mathrm{T} + \beta H_Gu_\star^\mathrm{T} = \psi^\mathrm{T}.
\end{eqnarray}
Furthermore, we have 
\begin{eqnarray}
H_X(u+u_\star S)^\mathrm{T} &=& 0
\end{eqnarray}
and 
\begin{eqnarray}
\alpha_\perp J_X(u+u_\star S)^\mathrm{T} = \psi^\mathrm{T}.
\end{eqnarray}
Therefore, 
\begin{eqnarray}
\vert e\vert = \vert u\vert + \vert v\vert \geq \vert u\vert + \vert u_\star S\vert \geq \vert u+u_\star S\vert \geq d(H_X,\alpha_\perp J_X,\psi).
\end{eqnarray}
Otherwise, $\vert e\vert \geq \vert v\vert \geq d_R$. The inequality of $d(\bar{H}_X,\bar{J}_X,\psi)$ has been proved. 
\end{proof}

\section{Spacetime error correction codes}
\label{app:spacetime}

To construct spacetime error correction codes , we make the following assumptions. First, the logical measurement is performed on a state within the logical subspace of the original code $(H_X, H_Z)$, meaning that the state is a common eigenstate of the stabilizer operators with known eigenvalues. Second, after the logical measurement, the stabilizer operators of the original code $(H_X, H_Z)$ are measured without error. In practice, to ensure reliable eigenvalues, we may need to measure stabilizer operators for several rounds both before and after the logical measurement. 

In the correction of errors in spacetime, most checks are comparisons between parity-check measurement outcomes. Consider the $j$-th $X$ stabilizer operator, which is represented by $(\bar{H}_X)_{j,\bullet}$, and let $\mu_{j,t},\mu_{j,t+1} = 0,1$ be its measurement outcomes in the $t$-th and $(t+1)$-th parity-check rounds, respectively. Then, $\mu_{j,t} + \mu_{j,t+1}$ is a check, and its value should be zero without errors. The errors that can flip the check are the following: measurement errors flipping outcomes $\mu_{j,t}$ and $\mu_{j,t+1}$, and $Z$ errors on data qubits in the support of the stabilizer operator $(\bar{H}_X)_{j,\bullet}$ occurring between the two parity-check rounds. Therefore, such a check is represented by a vector in the form 
\begin{eqnarray}
\left(\begin{array}{ccc}
(\bar{H}_X)_{j,\bullet} & 1 & 1
\end{array}\right), \notag
\end{eqnarray}
where the two ones correspond to measurement errors. If we consider all stabilizer operators in the $t$-th and $(t+1)$-th parity-check rounds, the checks are represented by a matrix in the form 
\begin{eqnarray}
\left(\begin{array}{ccc}
\bar{H}_X & E_{r_X+r_M} & E_{r_X+r_M}
\end{array}\right), \notag
\end{eqnarray}
where the two identity matrices $E_{r_X+r_M}$ correspond to measurement errors in the two rounds, respectively. If we consider several rounds of parity-check measurements, the matrix becomes 
\begin{eqnarray}
\left(\begin{array}{ccccccc}
\bar{H}_X & 0 & \cdots & E_{r_X+r_M} & E_{r_X+r_M} & 0 & \cdots \\
0 & \bar{H}_X & \cdots & 0 & E_{r_X+r_M} & E_{r_X+r_M} & \cdots \\
\vdots & \vdots & \ddots & \vdots & \vdots & \vdots & \ddots \\
\end{array}\right),
\end{eqnarray}
where columns of $\bar{H}_X$ correspond to $Z$ errors on data qubits, and columns of $E_{r_X+r_M}$ correspond to measurement errors. It is similar for $Z$ stabilizer operators. 

Now, we consider the first and the last parity-check rounds. Note that all qubits in the ancilla system are initialized and measured in the $X$ basis. We can work out eigenvalues of deformed-code $X$ stabilizer operators from eigenvalues of original-code $X$ stabilizer operators and eigenvalues of ancilla-qubit $X$ operators; therefore, all the first-round and last-round $X$ parity-check measurements can be verified using eigenvalues of original-code $X$ stabilizer operators and eigenvalues of ancilla-qubit $X$ operators. However, we can only work out eigenvalues for a subset of deformed-code $Z$ stabilizer operators from eigenvalues of original-code $Z$ stabilizer operators; therefore, only a subset of the first-round and last-round $Z$ parity-check measurements can be verified using eigenvalues of original-code $Z$ stabilizer operators. Taking into account these checks, the check matrices for correcting $Z$ and $X$ errors in spacetime are 
\begin{eqnarray}
H^{st}_X &=& \left(\begin{array}{cccc}
\bar{H}_X & 0 & 0 & E_{d_T;1}\otimes E_{r_X+r_M} \\
0 & E_{d_T-1}\otimes \bar{H}_X & 0 & H_{d_T}\otimes E_{r_X+r_M} \\
0 & 0 & \bar{H}_X & E_{d_T;d_T}\otimes E_{r_X+r_M}
\end{array}\right)
\end{eqnarray}
and 
\begin{eqnarray}
H^{st}_Z &=& \left(\begin{array}{cccc}
H_Z & 0 & 0 & E_{d_T;1}\otimes \gamma_1 \\
0 & E_{d_T-1}\otimes \bar{H}_Z & 0 & H_{d_T}\otimes E_{r_Z+n_G} \\
0 & 0 & H_Z & E_{d_T;d_T}\otimes \gamma_1
\end{array}\right),
\end{eqnarray}
respectively. Here, $E_{d_T;j}\in\mathbb{F}^{d_T}$ is the vector that only the $j$-th entry takes the value of one, and $H_{d_T}$ is the chain-graph check matrix of the repetition code with a distance of $d_T$. In $H^{st}_Z$, the appearance of $H_Z$ represents that only stabilizer operators included in $H_Z$ can be verified in the first and last rounds. 

Similar to check matrices, the generator matrices that describe the propagation of unmeasured $X$ and $Z$ logical operators in spacetime are 
\begin{eqnarray}
J^{st}_X &=& \left(\begin{array}{cccc}
\bar{J}_X & 1_{d_T-1}\otimes \bar{J}_X & \bar{J}_X & 0
\end{array}\right)
\end{eqnarray}
and 
\begin{eqnarray}
J^{st}_Z &=& \left(\begin{array}{cccc}
{\alpha_\perp^\mathrm{r}}^\mathrm{T} J_Z & 1_{d_T-1}\otimes \bar{J}_Z & {\alpha_\perp^\mathrm{r}}^\mathrm{T} J_Z & 0
\end{array}\right),
\end{eqnarray}
respectively. Here, $1_{d_T-1}\in\mathbb{F}^{d_T-1}$ is the vector that all entries take the value of one. The generator matrix that describes the propagation of measured $Z$ logical operators in spacetime is 
\begin{eqnarray}
J^{st}_{mz} &=& \left(\begin{array}{cccc}
\alpha J_Z & 1_{d_T-1}\otimes (\alpha J_Z\eta) & \alpha J_Z & 0,
\end{array}\right)
\end{eqnarray}
where 
\begin{eqnarray}
\eta = \left(\begin{array}{cc}
E_n & 0
\end{array}\right).
\end{eqnarray}
The generator matrix that describes the measurement of logical operators is 
\begin{eqnarray}
J^{st}_{oc} &=& \left(\begin{array}{cccc}
\alpha J_Z & 0 & 0 & E_{d_T;1}\otimes (\alpha J_ZR\gamma_2)
\end{array}\right),
\end{eqnarray}
where the first entry corresponds to the case that logical operators $\alpha J_Z$ are flipped before the logical measurement is applied, and the second entry corresponds to the case that outcomes of the first-round parity-check measurements are flipped; both of them can effectively flip logical measurement outcomes. 

\section{Error-wise distance of spacetime codes --- Proof of Theorem~\ref{the}}
\label{app:theorem}

\subsection{Errors on unmeasured $X$ logical operators}

Let $e = \left(\begin{array}{cccc}u_0 & u & u_1 & v\end{array}\right)$ be a logical $Z$ error that satisfies $H^{st}_Xe^\mathrm{T} = 0$ and $J^{st}_Xe^\mathrm{T} = \psi^\mathrm{T}$, i.e. 
\begin{eqnarray}
\bar{H}_Xu_0^\mathrm{T} &=& (E_{d_T;1}\otimes E_{r_X+r_M})v^\mathrm{T}, \\
(E_{d_T-1}\otimes \bar{H}_X)u^\mathrm{T} &=& (H_{d_T}\otimes E_{r_X+r_M})v^\mathrm{T}, \\
\bar{H}_Xu_1^\mathrm{T} &=& (E_{d_T;d_T}\otimes E_{r_X+r_M})v^\mathrm{T}
\end{eqnarray}
and 
\begin{eqnarray}
\bar{J}_Xu_0^\mathrm{T} + (1_{d_T-1}\otimes \bar{J}_X)u^\mathrm{T} + \bar{J}_Xu_1^\mathrm{T} = \psi^\mathrm{T}.
\end{eqnarray}
Let $U$ and $V$ be matrices corresponding to $u$ and $v$ such that $u^\mathrm{T}$ and $v^\mathrm{T}$ are vectorization forms of $U$ and $V$. Then, the equations become 
\begin{eqnarray}
\bar{H}_Xu_0^\mathrm{T} &=& VE_{d_T;1}^\mathrm{T}, \\
\bar{H}_XU &=& VH_{d_T}^\mathrm{T}, \\
\bar{H}_Xu_1^\mathrm{T} &=& VE_{d_T;d_T}^\mathrm{T}
\end{eqnarray}
and 
\begin{eqnarray}
\bar{J}_Xu_0^\mathrm{T} + \bar{J}_XU1_{d_T-1}^\mathrm{T} + \bar{J}_Xu_1^\mathrm{T} = \psi^\mathrm{T}.
\end{eqnarray}

Using 
\begin{eqnarray}
E_{d_T;1}^\mathrm{T} + H_{d_T}^\mathrm{T}1_{d_T-1}^\mathrm{T} + E_{d_T;d_T}^\mathrm{T} = 0,
\label{eq:sum}
\end{eqnarray}
we have 
\begin{eqnarray}
\bar{H}_Xu_0^\mathrm{T} + \bar{H}_XU1_{d_T-1}^\mathrm{T} + \bar{H}_Xu_1^\mathrm{T} = 0.
\end{eqnarray}
We define the effective error 
\begin{eqnarray}
u_{eff}^\mathrm{T} = u_0^\mathrm{T} + U1_{d_T-1}^\mathrm{T} + u_1^\mathrm{T}.
\end{eqnarray}
Then, 
\begin{eqnarray}
\bar{H}_Xu_{eff}^\mathrm{T} = 0
\end{eqnarray}
and 
\begin{eqnarray}
\bar{J}_Xu_{eff}^\mathrm{T} = \psi^\mathrm{T}.
\end{eqnarray}
Therefore, 
\begin{eqnarray}
\vert e\vert \geq \vert u_{eff}\vert \geq d(\bar{H}_X,\bar{J}_X,\psi).
\end{eqnarray}
The lower bound of $d(\bar{H}_X,\bar{J}_X,\psi)$ is given in Lemma~\ref{lem:deformed}. The inequality of $d(H^{st}_X,J^{st}_X,\psi)$ has been proved. 

\subsection{Errors on unmeasured $Z$ logical operators}
\label{app:Xerror}

Let $e = \left(\begin{array}{cccc}u_0 & u & u_1 & v\end{array}\right)$ be a logical $X$ error that satisfies $H^{st}_Ze^\mathrm{T} = 0$ and $J^{st}_Ze^\mathrm{T} = \psi^\mathrm{T}$, i.e. 
\begin{eqnarray}
H_Zu_0^\mathrm{T} &=& (E_{d_T;1}\otimes \gamma_1)v^\mathrm{T}, \\
(E_{d_T-1}\otimes \bar{H}_Z)u^\mathrm{T} &=& (H_{d_T}\otimes E_{r_Z+n_G})v^\mathrm{T}, \\
H_Zu_1^\mathrm{T} &=& (E_{d_T;d_T}\otimes \gamma_1)v^\mathrm{T}
\end{eqnarray}
and 
\begin{eqnarray}
{\alpha_\perp^\mathrm{r}}^\mathrm{T} J_Zu_0^\mathrm{T} + (1_{d_T-1}\otimes \bar{J}_Z)u^\mathrm{T} + {\alpha_\perp^\mathrm{r}}^\mathrm{T} J_Zu_1^\mathrm{T} = \psi^\mathrm{T}.
\end{eqnarray}
Let $U$ and $V$ be matrices corresponding to $u$ and $v$ such that $u^\mathrm{T}$ and $v^\mathrm{T}$ are vectorization forms of $U$ and $V$. Then, the equations become 
\begin{eqnarray}
H_Zu_0^\mathrm{T} &=& \gamma_1 VE_{d_T;1}^\mathrm{T}, \label{eq:u0} \\
\bar{H}_ZU &=& VH_{d_T}^\mathrm{T}, \label{eq:U} \\
H_Zu_1^\mathrm{T} &=& \gamma_1 VE_{d_T;d_T}^\mathrm{T} \label{eq:u1}
\end{eqnarray}
and 
\begin{eqnarray}
{\alpha_\perp^\mathrm{r}}^\mathrm{T} J_Zu_0^\mathrm{T} + \bar{J}_ZU1_{d_T-1}^\mathrm{T} + {\alpha_\perp^\mathrm{r}}^\mathrm{T} J_Zu_1^\mathrm{T} = \psi^\mathrm{T}.
\end{eqnarray}

Using Eq.~(\ref{eq:sum}), we have 
\begin{eqnarray}
H_Zu_0^\mathrm{T} + \gamma_1 \bar{H}_ZU1_{d_T-1}^\mathrm{T} + H_Zu_1^\mathrm{T} &=& 0.
\end{eqnarray}
Notice that $\gamma_1 \bar{H}_Z = H_Z\eta$. We define the effective error 
\begin{eqnarray}
u_{eff}^\mathrm{T} &=& u_0^\mathrm{T} + \eta U1_{d_T-1}^\mathrm{T} + u_1^\mathrm{T}.
\end{eqnarray}
Then, 
\begin{eqnarray}
H_Z u_{eff}^\mathrm{T} &=& 0
\end{eqnarray}
and 
\begin{eqnarray}
{\alpha_\perp^\mathrm{r}}^\mathrm{T} J_Z u_{eff}^\mathrm{T} = \psi^\mathrm{T}.
\end{eqnarray}
Here, we have used that $\bar{J}_Z = {\alpha_\perp^\mathrm{r}}^\mathrm{T} J_Z \eta$. Therefore, 
\begin{eqnarray}
\vert e\vert \geq \vert u_{eff}\vert \geq d(H_Z,{\alpha_\perp^\mathrm{r}}^\mathrm{T} J_Z,\psi).
\end{eqnarray}
The inequality of $d(H^{st}_Z,J^{st}_Z,\psi)$ has been proved. 

\subsection{Errors on measured $Z$ logical operators}

Using the same notations as in Appendix~\ref{app:Xerror}, we consider a logical $X$ error that satisfies $H^{st}_Ze^\mathrm{T} = 0$ and $J^{st}_{mz}e^\mathrm{T} = \psi$. By following the same procedure as in Appendix~\ref{app:Xerror} and replacing $J^{st}_Z$ with $J^{st}_{mz}$, we can prove the inequality of $d(H^{st}_Z,J^{st}_{mz},\psi)$. 

\subsection{Errors on measurement outcomes}

Using the same notations as in Appendix~\ref{app:Xerror}, we consider a logical $X$ error that satisfies $H^{st}_Ze^\mathrm{T} = 0$ and $J^{st}_{oc}e^\mathrm{T} = \psi$. 
Then, Eqs. (\ref{eq:u0}), (\ref{eq:U}), and (\ref{eq:u1}) hold, and 
\begin{eqnarray}
\alpha J_Zu_0^\mathrm{T} + (\alpha J_ZR\gamma_2)VE_{d_T;1}^\mathrm{T} &=& \psi.
\label{eq:oc}
\end{eqnarray}

If $\vert e\vert \leq d_T$, at least one round of parity-check measurements is error-free. Suppose the $j$-th round is error-free, the $U$ and $V$ errors are in the form 
\begin{eqnarray}
U &=& \left(\begin{array}{cc}
U_{1:j-1} & U_{j:d_T-1}
\end{array}\right), \\
V &=& \left(\begin{array}{ccc}
V_{1:j-1} & 0 & V_{j+1:d_T}
\end{array}\right),
\end{eqnarray}
where $M_{a:b}$ denotes the columns from $a$ to $b$. Accordingly, Eqs. (\ref{eq:u0}), (\ref{eq:U}) and (\ref{eq:u1}) become 
\begin{eqnarray}
H_Zu_0^\mathrm{T} &=& \gamma_1 V_0E_{d_T;1}^\mathrm{T} \\
\bar{H}_ZU_{1:j-1} &=& V_0(H_{d_T}^\mathrm{T})_{1:j-1}, \\
\bar{H}_ZU_{j:d_T-1} &=& V_1(H_{d_T}^\mathrm{T})_{j:d_T-1}, \\
H_Zu_1^\mathrm{T} &=& \gamma_1 V_1E_{d_T;d_T}^\mathrm{T},
\end{eqnarray}
where 
\begin{eqnarray}
V_0 &=& \left(\begin{array}{ccc}
V_{1:j-1} & 0 & 0
\end{array}\right), \\
V_1 &=& \left(\begin{array}{ccc}
0 & 0 & V_{j+1:d_T}
\end{array}\right).
\end{eqnarray}
Using 
\begin{eqnarray}
V_0[E_{d_T;1}^\mathrm{T} + (H_{d_T}^\mathrm{T})_{1:j-1}1_{j-1}^\mathrm{T}] &=& 0, \\
V_1[(H_{d_T}^\mathrm{T})_{j:d_T-1}1_{d_T-j}^\mathrm{T} + E_{d_T;d_T}^\mathrm{T}] &=& 0, \\
\end{eqnarray}
we have 
\begin{eqnarray}
H_Zu_0^\mathrm{T} + \gamma_1 \bar{H}_ZU_{1:j-1}1_{j-1}^\mathrm{T} &=& 0, \\
\gamma_1 \bar{H}_ZU_{j:d_T-1}1_{d_T-j}^\mathrm{T} + H_Zu_1^\mathrm{T} &=& 0.
\end{eqnarray}

We define the effective error 
\begin{eqnarray}
u_{eff}^\mathrm{T} &=& u_0^\mathrm{T} + \eta U_{1:j-1}1_{j-1}^\mathrm{T}.
\end{eqnarray}
Then, 
\begin{eqnarray}
H_Zu_{eff}^\mathrm{T} &=& 0.
\end{eqnarray}
Using 
\begin{eqnarray}
VE_{d_T;1}^\mathrm{T} = V_0E_{d_T;1}^\mathrm{T} = V_0(H_{d_T}^\mathrm{T})_{1:j-1}1_{j-1}^\mathrm{T} = \bar{H}_ZU_{1:j-1} 1_{j-1}^\mathrm{T},
\end{eqnarray}
we rewrite Eq. (\ref{eq:oc}) as 
\begin{eqnarray}
\psi = \alpha J_Zu_0^\mathrm{T} + (\alpha J_ZR\gamma_2)\bar{H}_ZU_{1:j-1} 1_{j-1}^\mathrm{T} = \alpha J_Zu_0^\mathrm{T} + \alpha J_Z\eta U_{1:j-1} 1_{j-1}^\mathrm{T} = \alpha J_Zu_{eff}^\mathrm{T}.
\end{eqnarray}
Therefore, 
\begin{eqnarray}
\vert e\vert &\geq& \vert u_{eff}\vert \geq d(H_Z,\alpha J_Z,\psi).
\end{eqnarray}
The inequality of $d(H^{st}_Z,J^{st}_{oc},\psi)$ has been proved. 

\section{Numerical simulations}
\label{app:simulations}

\subsection{Numerical decoding simulations}

For the simulations in this work, we use Stim~\cite{Gidney2021} to simulate Clifford circuits. To decode error syndromes, we employ a two-stage quantum decoder: \textit{belief propagation plus ordered statistics decoding} (BP+OSD) \cite{Roffe2020,Roffe_LDPC_Python_tools_2022}. We implement the joint measurement through the \textit{devised sticking} protocol~\cite{Zhang2025}. 

We use the product coloration circuit for parity-check measurements, employing a single ancilla for each stabilizer generator~\cite{Xu2024}. The numerical simulations of the joint measurement begin by initializing logical qubits. We then simulate $d_T = 10$ parity-check measurement cycles of the deformed code. Additionally, before and after deformed-code parity-check measurements, $d_T = 10$ rounds of parity-check measurements are performed on both the register and the surface code. Finally, a transversal readout of the qubits is performed. 
 
In the parity-check measurement circuit, we consider a circuit-level depolarizing error model, characterized by a single error probability $p$. The specific error model per time step considered is as follows:
\begin{itemize}
    \item Qubits initialized in $\ket{0}$ or $\ket{+}$ are subject to depolarizing errors with probability $p$;
    \item Qubit pairs undergoing two-qubit operations are followed by two-qubit depolarizing errors, where each of the 15 non-identity two-qubit Pauli operators occurs with probability $\tfrac{p}{15}$;
    \item Idle qubits are subject to depolarizing errors with probability $p$;
    \item Measurement outcomes in the $X$ or $Z$ basis are subject to a bit-flip error with probability $\tfrac{2p}{3}$.
\end{itemize}

For details of the decoding in this work, we use a min-sum variant of the BP decoder~\cite{Emran2014} and an order-5 OSD with cost scaling (OSD-CS) decoder \cite{Panteleev2021}. For the min-sum BP decoder, we set the maximum number of iterations to 1000 (\texttt{max\_iter} = 1000) and the scaling factor to 0.9 (\texttt{ms\_scaling\_factor} = 0.9).

Logical error rates are evaluated using Monte Carlo simulations. We perform Monte Carlo sampling of the physical errors based on the circuit-level error model, then generate the syndrome and logical observable measurements corresponding to the sampled set of physical errors. Decoders then predict which errors occurred based on the parity-check measurements and correct the logical observable based on these predictions. A logical error occurs if the corrected logical observable measurement differs from the true noiseless value. The logical error rate is modeled as a binomial distribution, where each trial results in either a logical error or no error. Let $p_L$ denote the logical failure probability. The standard deviation of $p_L$ is given by $\sigma_{p_L}= \sqrt{(1-p_L)p_L/N}$, where $N$ is the number of samples.

\subsection{Logical errors in the joint measurement}

In the joint measurement, there are two types of errors that cause $X$ logical errors on injected magic states: errors that flip logical operators $Z_j$ and errors that flip measurement outcomes. Note that a flipped measurement outcome leads to an $X$ logical error on the corresponding fault-tolerant qubit due to the gate $V$. 

Let $e$ be a binary vector representing $X$ errors in spacetime, and let $EC(e)$ be a vector representing corresponding errors after error correction. We use $(Z_j)$ to denote the set of $e$ such that $EC(e)$ flips the logical operator $Z_j$, and we use $(oc_j)$ to denote the set of $e$ such that $EC(e)$ flips the measurement outcome of $Z_jz_j$; these errors correspond to the rows in $(J^{st}_Z)_{1:q}$ and $J^{st}_{oc}$, respectively. Then, the set of $e$ such that $EC(e)$ includes the logical error $X_j$ on the injected magic state is $[(Z_j)\cup(oc_j)] \setminus [(Z_j)\cap(oc_j)]$. The probability of the logical error $X_j$ on the injected magic state is $\mathrm{Pr}\{[(Z_j)\cup(oc_j)] \setminus [(Z_j)\cap(oc_j)]\}$. In the main text, $A=[(Z_1)\cup(oc_1)] \setminus [(Z_1)\cap(oc_1)]$ and $B=[(Z_2)\cup(oc_2)] \setminus [(Z_2)\cap(oc_2)]$. 

There is only one type of error that causes $Z$ logical errors on injected magic states: errors that flip logical operators $X_jx_j$. Let $e'$ be a vector representing $Z$ errors in spacetime, and let $EC(e')$ be a vector representing the corresponding errors after error correction. We use $(X_jx_j)$ to denote the set of $e'$ such that $EC(e')$ flips the logical operator $X_jx_j$. In the main text, $C = (X_1x_1)$ and $D = (X_2x_2)$. 

\end{widetext}

\end{document}